\newcommand{\ie}{{\emph{i.e. \/}}}
\newtheorem{remark}{Remark}
\DeclareMathOperator{\tr}{tr}
\DeclareMathOperator{\Tr}{Tr}
\DeclareMathOperator{\diag}{diag}
\newcommand{\N}{\ensuremath{\mathbb{N}}}
\newcommand{\C}{\ensuremath{\mathbb{C}}}
\newcommand{\ket}[1]{\ensuremath{|#1\rangle}}
\newcommand{\bra}[1]{\ensuremath{\langle#1|}}
\newcommand{\ketbra}[2]{\ensuremath{\ket{#1}\bra{#2}}}
\newcommand{\proj}[1]{\ensuremath{\ketbra{#1}{#1}}}
\newcommand{\braket}[2]{\ensuremath{\langle{#1}|{#2}\rangle}}
\newcommand{\1}{{\rm 1\hspace{-0.9mm}l}}
\newcommand{\Lrm}{\ensuremath{\mathrm{L}}}
\newcommand{\ee}{\ensuremath{\mathrm{e}}}
\newcommand{\ii}{\ensuremath{\mathrm{i}}}
\newcommand{\MM}{\mathcal{M}}
\newcommand{\NN}{\mathcal{N}}
\newcommand{\DD}{\mathcal{D}}
\newcommand{\TT}{\mathcal{T}}
\newcommand{\PP}{\mathcal{P}}
\newcommand{\UU}{\mathcal{U}}
\newcommand{\HH}{\mathcal{H}}
\newcommand{\DU}{\mathcal{DU}}
\newcommand{\diaguni}{\ensuremath{\mathcal{DU}}}
\newcommand{\eqdef}{\mathrel{:=}}
\newcommand{\poptent}{p_\mathrm{opt}}  
\newcommand{\pent}{p_\mathrm{u}} 
\newcommand{\p}{\tilde{p}_\mathrm{u}}  
\newtheorem{lemma}{Lemma}
\newtheorem{theorem}{Theorem}
\newtheorem{corollary}{Corollary}
\newtheorem*{rem*}{Remark}
\def\>{\rangle}
\def\<{\langle}
\begin{document}

\title[Multiple-shot and unambiguous 
 discrimination]{Multiple-shot and unambiguous 
 discrimination of von Neumann measurements}

\author{Zbigniew Pucha{\l}a}
\affiliation{Institute of Theoretical and Applied Informatics, Polish 
Academy
of Sciences, ul. Ba{\l}tycka 5, 44-100 Gliwice, Poland}
\affiliation{Faculty of Physics, Astronomy and Applied Computer Science,
Jagiellonian University, ul. {\L}ojasiewicza 11, 30-348 Krak{\'o}w, 
Poland}
\orcid{0000-0002-4739-0400}

\author{{\L}ukasz Pawela}
\affiliation{Institute of Theoretical and Applied Informatics, Polish 
Academy
of Sciences, ul. Ba{\l}tycka 5, 44-100 Gliwice, Poland}
\email{lpawela@iitis.pl}
\orcid{0000-0002-0476-7132}

\author{Aleksandra Krawiec}
\affiliation{Institute of Theoretical and Applied Informatics, Polish 
Academy
of Sciences, ul. Ba{\l}tycka 5, 44-100 Gliwice, Poland}
\affiliation{Institute of Mathematics, Silesian University of 
Technology, ul. Kaszubska 23, 44-100 Gliwice, Poland}
\orcid{0000-0001-8390-6569}

\author{Ryszard Kukulski}
\affiliation{Institute of Theoretical and Applied Informatics, Polish 
Academy
of Sciences, ul. Ba{\l}tycka 5, 44-100 Gliwice, Poland}
\affiliation{Institute of Mathematics, University of Silesia, ul. 
Bankowa 14, 
40-007 Katowice, Poland}
\orcid{0000-0002-9171-1734}

\author{Micha{\l} Oszmaniec}
\affiliation{Institute of Theoretical Physics and Astrophysics, 
National Quantum Information Centre, Faculty of Mathematics,
Physics and Informatics, University of Gda{\'n}sk, ul. Wita Stwosza 57, 
80-308 
Gda{\'n}sk, Poland}
\affiliation{Center for Theoretical Physics, Polish Academy of Sciences\\ Al. 
Lotnik\'ow 32/46, 02-668
Warszawa, Poland }
\orcid{0000-0002-4946-6835}

\maketitle

\begin{abstract}
We present an in-depth study of the problem of multiple-shot discrimination of 
von Neumann measurements in finite-dimensional Hilbert spaces. Specifically, we 
consider two scenarios: minimum error and unambiguous discrimination.   In the 
case of minimum error discrimination, we focus on discrimination of 
measurements with the assistance of entanglement. 
We provide an alternative proof of the fact that
all pairs of distinct von Neumann measurements can be distinguished perfectly
(i.e. with the unit success probability) using only a finite number of queries.
Moreover, we analytically find the minimal number of queries needed for perfect 
discrimination.
We also show that in this scenario querying the measurements \emph{in parallel}
gives the optimal strategy, and hence  any possible adaptive methods do not 
offer any advantage over the parallel scheme. In the unambiguous discrimination 
scenario, we give 
the general expressions for the optimal discrimination probabilities with and
without the assistance of entanglement. Finally, we show that  typical pairs of
Haar-random von Neumann measurements can be perfectly distinguished with only
two queries.
\end{abstract}

\section{Introduction}
With the recent technological progress, quantum information science is not
merely a collection of purely theoretical ideas anymore. Indeed, quantum
protocols of increasing degree of complexity are currently being implemented on
more and more complicated quantum devices~\cite{Carolan2015,Boixo2018} and are
expected to soon yield practical solutions to some real-world
problems~\cite{preskill2018quantum}. This situation motivates the need for 
certification 
and benchmarking of various building-blocks of quantum
devices~\cite{RandBench2011,Aolita2015,Wooton2018} (see~\cite{eisert2020quantum}
for a recent review). Discrimination or quantum hypothesis testing constitute
one of the paradigms for assessing the quality of parts of quantum
protocols~\cite{Chefles2000,Barnett09,Bergou2010,StructureStateDISC,
pirandola2019fundamental}. In this work, we present a comprehensive study of
various scenarios of discrimination of von Neumann measurements  on  a
finite-dimensional Hilbert space. Here, by von Neumann measurements, we
understand fine-grained projective measurements. 
Such measurements
are vital for most of the protocols appearing in quantum information
and quantum computing. In fact, even the most general quantum measurements
are typically implemented using projective measurements performed on
enlarged Hilbert space via the so-called Naimark construction 
\cite{watroustheory}.
Due to imperfections present in currently available quantum devices   
\cite{preskill2018quantum}
and increased complexity associated with Naimark construction,  standard
projective measurements, such as the computational basis measurement, are
the most common measurements implemented currently on quantum
processors. It is important to point out that despite their relative
simplicity, the actual implementation of projective measurements on
quantum hardware is imperfect. For example, in a recent demonstration
of quantum computational supremacy (advantage) by the collaboration of
Google and UCBS 
\cite{arute2019quantum}
the researchers reported
single-qubit measurement errors that are of order of a few percents.
This motivates the interest in certification strategies tailored
specifically to  von-Neumann measurements.

The general problem of quantum
channel discrimination has attracted a lot of attention in recent years. One of
the first results was the study of discrimination of unitary
operators~\cite{acin2001statistical,duan2007entanglement}. Later, this has been
extended to various settings, such as multipartite unitary
operations~\cite{duan2008local} and the case of discrimination among more than 
two unitary channels~\cite{chiribella2013identification}. In the   
work~\cite{duan2009perfect}
the authors formulated necessary and sufficient conditions under which quantum 
channels can be perfectly discriminated. Further works investigated the
adaptive~\cite{harrow2010adaptive,cope2017adaptive,pirandola2017ultimate,krawiec2020discrimination}
and parallel~\cite{duan2016parallel} schemes for  discrimination of channels.
Finally, some asymptotic results on discrimination of typical quantum channels
in large dimensions were obtained in~\cite{nechita2018almost}. Discrimination of
quantum measurements, being a subset of quantum channels, is thus of particular
interest. Some of the earliest results on this topic involve condition on
perfect discrimination of two
measurements~\cite{ji2006identification,Ariano2001,Chiribella2008,puchala2018strategies,sedlak2014optimal}

We are interested in the following problem. Imagine we have an unknown
device hidden in a black box. We know it performs one of the two possible von
Neumann measurements, either $\PP_1$ or $\PP_2$. Generally, whenever a quantum
state is sent through  the box,  the box produces, with probabilities predicted
by quantum mechanics, classical labels corresponding to the measurement
outcomes. Our goal is to find schemes that attain the optimal success
probability for discrimination of measurements. The results contained in this
work concern the following two scenarios:

\emph{Minimum error discrimination---} In this setting, we are allowed to use 
the black box containing von Neumann measurement many times. Furthermore, we can
prepare any input state with an arbitrarily large quantum memory (i.e., we can
use ancillas of arbitrarily large dimension), and we can perform any channels
between usages of the black box. This allows us to implement both parallel (see
Fig.~\ref{fig:parallel_introduction}) as well as adaptive discrimination
strategies (see Fig.~\ref{fig:adaptive}). We focus on the case of
entanglement-assisted discrimination. Our main finding is that in the
multiple-shot scenario, adaptive strategies do not offer any advantage over
parallel queries. Moreover, we derive an explicit dependence  of the diamond
norm distance in the multiple-shot scenario as a function of the diamond norm in
the single-shot setting. As a consequence, we recover a known result
\cite{ji2006identification} stating that given sufficiently many queries, every
pair of different von Neumann measurements can be distinguished perfectly (i.e.
with zero error probability). 

\begin{figure}[h!]
\centering\includegraphics{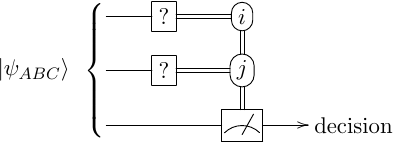}
\caption{A schematic representation of parallel discrimination scheme of 
quantum measurements that uses two applications of a measurement. }
\label{fig:parallel_introduction}
\end{figure}

\begin{figure}[h!]
\centering\includegraphics[scale=0.85]{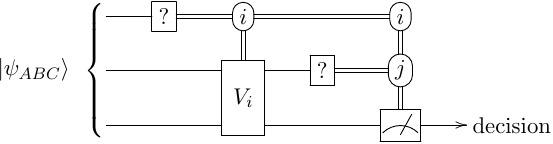}
\caption{A schematic representation of adaptive discrimination scheme 
of quantum measurements that uses two applications of a measurement. In this 
case adaptive scheme amounts to application of the unitary channel $\Phi_{V_i}$ 
which depends on the result $i$ of the first measurement.}
\label{fig:adaptive}
\end{figure}

\emph{Unambiguous discrimination---}  This scenario is an analogue  to the
well-known scheme of unambiguous state discrimination \cite{DIEKS1988}. Namely,
for every query to the black box, the decision procedure outputs  $\PP_1$,
$\PP_2$, or the inconclusive answer. The latter  means that the user cannot
decide which measurement was contained in the black box. Importantly, we require
that the procedure cannot wrongly identify the measurement (see
Fig.~\ref{fig:unam}). Our main contribution to this problem is the derivation of
the general schemes, which attain the optimal success probability both with and
without the assistance of entanglement. Interestingly, we find that optimal
success probability $p_u$ for unambiguous discrimination of projective
measurements $\PP_1, \PP_2$ with the assistance of entanglement is functionally
related to the diamond norm distance $\| \PP_1- \PP_2 \|_\diamond$, which
quantifies distinguishability of $\PP_1, \PP_2$ with the assistance of 
entanglement but without requiring unambiguous results. The specific formula is 
given by
\begin{equation}
p_u = 1 -\sqrt{1-\frac{1}{4} \left\Vert \PP_1- \PP_2 \right\Vert_\diamond^2}
\end{equation}
and its geometric interpretation is presented in Fig.~\ref{fig:circle-plot-0}.
Finally, we also present simple formulas for the optimal discrimination
probability of von Neumann measurements for qubits.

\begin{figure}[!h]
\centering
\begin{subfigure}{0.48\textwidth}
\includegraphics[width=\textwidth]{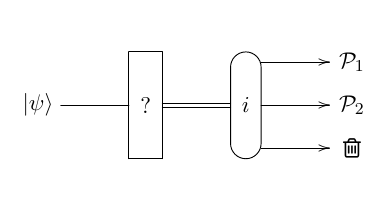}
\end{subfigure}
\begin{subfigure}{0.48\textwidth}
\includegraphics[width=\textwidth]{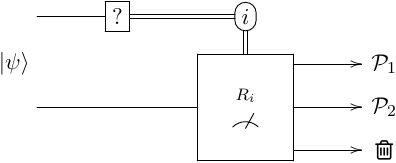}
\end{subfigure}
\caption{A schematic representation of the setting of unambiguous measurement 
discrimination scheme. The figure on the left shows an umbiguous dscrimination
without the assistance of entanglement while the figure on right shows
entanglement-assisted unambiguous discrimination.}\label{fig:unam}
\end{figure}

\begin{figure}
	\centering
	\includegraphics[width=0.48\linewidth]{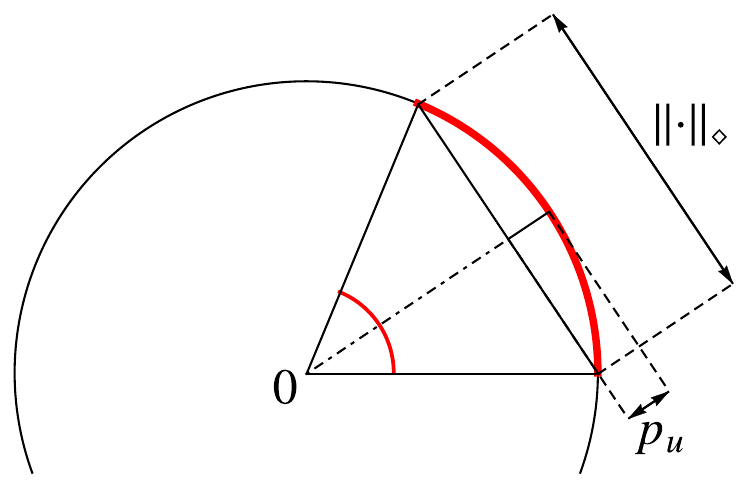}
	\caption{	
		Geometric interpretation of the relationship between the diamond norm 
		$\|\PP_1-\PP_2\|_\diamond$ and the probability of unambiguous 
		discrimination $\pent(\PP_{1},\PP_{2})$ presented on the unit circle.
	}
	\label{fig:circle-plot-0}
\end{figure}

\emph{Relation with classical channels---}
Let us contrast our results with transformations
mapping probability distributions to probability distributions, that is 
classical
channels. In this setting, we know that if such channels cannot be perfectly
distinguished in one shot, they cannot be distinguished perfectly in any finite
number of uses. What we can do is to study the asymptotic behavior of error
probability when the number of applications of the channel tends to infinity.
The error probability, formulated in the language of the hypothesis testing of 
two distributions, decays exponentially, and the optimal exponential error rate,
depending on a formulation, is given by the Stein bound, the Chernoff bound, the
Hoeffding bound, and the Han-Kobayashi bound,
see~\cite{hayashi2009discrimination} and references therein. In the case of
distinct von Neumann measurements and entanglement-assisted discrimination, we
show that one can perform perfect discrimination with the use of a finite number
of queries. Therefore, in contrary to the classical channels, we do not have to
consider the exponential error rate, as the error probability drops to zero in
a finite number of tries.

This work is organized as follows. In Section
\ref{sec:notation}, we give a survey of the main concepts and notation used
throughout this work (including the basic background on discrimination of
quantum channels and measurements). Then, in Section~\ref{sec:multSHOT}, we 
present our results for the scenario of
multiple-shot minimum error measurement discrimination. 
Theorem~\ref{th:parallel} therein expresses the minimum error in the 
parallel discrimination scheme as a function of the minimum error in the 
single-shot discrimination scheme.
Theorem~\ref{th:random} gives the upper bound on the probability of correct 
discrimination of a generic pair of von Neumann measurements coming from the 
Haar distribution.
The following Section~\ref{sec:unambiguousDISC} contains the results concerning 
the unambiguous discrimination of quantum measurements.
The main result of this section is formulated as Theorem~\ref{thm:ENTunamb}, 
which states the optimal success probability of unambiguous discrimination with 
the assistance of entanglement.
Lastly, in Section~\ref{sec:outro} we summarize our results and give some
directions for future research.

\section{Preliminaries and main concepts}\label{sec:notation}
By $\DD(\C^d)$ we will denote the set of quantum states on $\C^d$. Let
$\Lrm(\C^{d_1}, \C^{d_2})$ denote the set of all linear operators acting from
$\C^{d_1}$ to $\C^{d_2}$. For brevity we will put $\Lrm(\C^d, \C^d)\equiv
\Lrm(\C^d)$. A quantum channel is a linear mapping $\Phi: \Lrm(\C^{d_1}) \to
\Lrm(\C^{d_2})$ which is completely positive and trace-preserving. The former
means that for every $\Lrm(\C^{d_1}) \otimes \Lrm(\C^s)\ni \rho \geq 0$ we have
$\Lrm(C^{d_2}) \otimes \Lrm(\C^s) \ni (\Phi \otimes \1)(\rho) \geq 0$, while the
latter means $\Tr(\Phi(X)) = \Tr X$ for every $X \in \Lrm(\C^{d_1})$. A set of
generalized measurements (POVMs) $\C^d$ will be denoted by
$\mathrm{POVM}(\C^d)$.  A general quantum measurement $\MM$ on $\C^d$ is  a
tuple of positive semidefinite operators\footnote{In this paper we restrict our
attention to measurements with a finite number of outcomes.} on $\C^d$ that add
up to identity on $\C^d$ i.e. $\MM=(M_1,\ldots, M_n)$ with $M_i \geq 0$ and
$\sum_i M_i  =\1$.  If a quantum state $\sigma$ is measured by a measurement
$\MM$, then the outcome $i$ is obtained with the probability
$p(i|\sigma,\MM)=\tr (\sigma M_i)$ (Born rule). Therefore, a quantum measurement
$\MM$ can be uniquely  identified with a quantum channel
\begin{equation}\label{eq:measurePrepare}
\Psi_\MM (\sigma) =\sum_{i=1}^n \tr(M_i \sigma) \ketbra{i}{i},
\end{equation}
where states $\ketbra{i}{i}$ are perfectly distinguishable (orthogonal) pure
states that can be regarded as states describing the state of a classical
register. In what follows we will abuse the notation and simply treat quantum 
measurements (denoted by symbols $\MM,\NN,\PP$, ...) as quantum channels having 
the classical outputs. Using this interpretation one can readily use the results
concerning the discrimination of quantum channels for generalized measurements.
In particular, for entanglement-assisted discrimination of quantum channels we
have a classic result due to Helstrom~\cite{helstrom1976quantum}. It states
that the probability of correct discrimination $p_\mathrm{opt}(\Phi,\Psi)$ 
between two quantum channels $\Phi$ and
$\Psi$ is given by
\begin{equation}
p_\mathrm{opt}(\Phi,\Psi) = \frac{1}{2} + \frac{1}{4} \left\Vert \Phi - \Psi
\right\Vert_\diamond,
\end{equation}
where $\left\Vert \mathbf{S} \right\Vert_\diamond = \max_{\Vert X
\Vert_1 = 1} \left\Vert (\mathbf{S} \otimes \1 ) (X) \right\Vert_1$ denotes the
diamond norm of the linear map $\mathbf{S}$. Any optimal $X$ is called a 
discriminator. Thus, if the value of the diamond norm of the difference of two 
channels is strictly smaller than two, then the two channels cannot be 
distinguished perfectly in a single-shot scenario. 

In this work we will be concerned with von Neumann measurements i.e. projective
and fine-grained measurements on a Hilbert space of a given dimension $d$. Von
Neumann measurements $\PP$ in $\C^d$ are tuples of orthogonal projectors on
vectors forming an orthonormal basis $\lbrace \ket{\psi_i}\rbrace_{i=1}^d$ in
$\C^d$ i.e.
\begin{equation}\label{eq:projDEF}
\PP= \left( \ketbra{\psi_1}{\psi_1},\ketbra{\psi_2}{\psi_2},\ldots,  
\ketbra{\psi_d}{\psi_d} \right).
\end{equation}
In  what follows we will use $\PP_{\1}$ to denote the measurement in the
standard computational basis. We will also use $\PP_U$ to denote the von Neumann
measurement in the basis $\ket{\psi_i}=U \ket{i}$ for a unitary $d\times d$
matrix $U\in \mathcal{U}_d$. 
In other words, vectors $\ket{\psi_i}$ from Eq.\eqref{eq:projDEF} are columns 
of the matrix $U$. 
We also specify the subset $\DU_d \subset \UU_d$ of
diagonal, unitary matrices. 
Consider now a general task
of discriminating between two projective  measurements $\PP_{U_1}$, $\PP_{U_2}$,
and let $p_\mathrm{opt}(\PP_{U_1},\PP_{U_2})$ be the optimal probability for
discriminating between these measurements (both for minimum error and for 
unambiguous discrimination). Then, due to the unitary invariance of the
discrimination problem and identity $\PP_U (\cdot)=\PP_{\1} \circ (U^\dagger
\cdot U)$ we obtain that
$p_\mathrm{opt}(\PP_{U_1},\PP_{U_2})=p_\mathrm{opt}(\PP_\1,\PP_{U_{1}^\dagger
U_2}) $.  Therefore, for any discrimination of projective measurements, without
loss of generality, we can limit ourselves to considering the problem of
distinguishing between the measurement in the standard basis $\PP_\1$ and
another projective measurement $\PP_U$. It is important to note that definition
of measurement $\PP_U$  distinguishes projective measurements differing only by
ordering of elements of the basis. Moreover, a set of unitary matrices $\lbrace
UE|\ E\in \DU_d \rbrace$ specifies the same projective measurement, ie.
$\PP_U =\PP_{UE}$ for all $E \in \DU_d$.

Distinguishability of quantum measurements is strictly related to the
distinguishability of unitary channels. The prominent
result~\cite{chiribella2009theoretical,bisio2011quantum} gives an expression
which makes calculating the diamond norm of the difference of unitary channels
$\Phi_U$, $\Phi_\1$ substantially easier.  It says that the for a unitary matrix
$U$ we have
\begin{equation} \label{eq:dimondNORMuni}
\left\Vert \Phi_U - \Phi_\1 \right\Vert_\diamond = 2 \sqrt{1 - \nu^2},
\end{equation}
where $\nu  =  \min_{x \in W(U)} \vert x \vert $ and 
$W(X) \coloneqq \{ \bra{\psi} X \ket{\psi}: \braket{\psi}{\psi} = 1 \}$ 
denotes the
numerical range of the operator $X$. Building on this result, in
\cite{puchala2018strategies} the following characterization of the diamond
norm of the distance
between von Neumann measurements was obtained
\begin{equation}\label{eq:mainRESold}
\left\Vert \PP_U - \PP_\1 \right\Vert_\diamond = \min_{E \in \mathcal{DU}_d} 
\left\Vert \Phi_{UE} - \Phi_\1 \right\Vert_\diamond. 
\end{equation}
Therefore the distance between two von Neumann measurements is the minimal value
of the diamond norm of the difference between optimally coherified
channels~\cite{korzekwa2018coherifying}. Moreover, there is a simple condition
which lets us check whether two measurements are perfectly distinguishable
\cite{puchala2018strategies,duan2016parallel,ji2006identification}. It holds
that $\left\Vert \PP_U - \PP_\1 \right\Vert_\diamond =2$ if and only if there
exists a state $\rho$ such that
\begin{equation}\label{eq:perfect_distinguishability}
\diag \left( \rho U  \right) = 0.
\end{equation}

Equation \eqref{eq:mainRESold} will be of significant importance throughout 
this work.
We will also make use of the dephasing channel denoted   
$\Delta(\rho) = \sum_i \proj{i} \rho \proj{i}$.
Finally, when talking about the eigenvalues of a unitary matrix $U\in \UU_d$, 
we will follow convention that $\lambda_1=\ee^{\ii 
\alpha_1},\ldots,\lambda_d=\ee^{\ii \alpha_d}$ 
are ordered by 
their phases, if $\alpha_1 \le \ldots \le \alpha_d$ for
$\alpha_1,\ldots,\alpha_d \in [0,2 \pi)$. The angle of the shortest arc
containing all eigenvalues will be denoted by $\Theta (U)$.

\section{Minimum error discrimination}\label{sec:multSHOT}
Multiple-shot minimum error discrimination is a natural generalization of the 
single-shot
scheme studied in \cite{puchala2018strategies}. In this scenario we have
access to multiple queries to quantum measurements which is mathematically
equivalent to the problem of single-shot discrimination of channels
$\PP_{U^{\otimes N}}, \PP_{\1^{\otimes N}}$ acting on states defined on
multiple-component Hilbert space $(\C^d)^{\otimes N}$ (see Fig.
\ref{fig:parallel_scheme}). 
In the following 
sections will write shortly $\PP_{\1}$ instead of $\PP_{\1^{\otimes N}}$.

\begin{figure}
\includegraphics[scale=1]{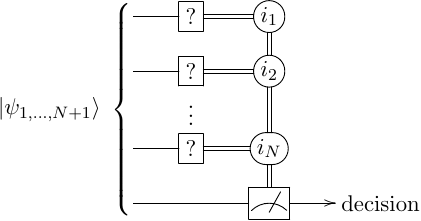}
\caption{A schematic representation of the parallel discrimination scheme of 
quantum measurements. }
 \label{fig:parallel_scheme}
\end{figure}

As we described in the preceding section, the problem of distinguishing
quantum measurements is intimately related to distinguishing unitary  channels
\cite{puchala2018strategies}. In what follows we leverage this result to prove a
number of results regarding multiple-shot discrimination of von Neumann
measurements.

\subsection{Optimality of the parallel scheme}\label{sec:optim-parallel}
The authors of~\cite{chiribella2008memory} showed that
parallel discrimination scheme is optimal among all possible architectures for
the case of discrimination of unitary channels.
In this subsection we will prove a theorem which thesis is rendered in the
spirit of the results obtained in~\cite{chiribella2008memory},
nevertheless, our theorem concerns the discrimination of 
von Neumann measurements. 
It is worth mentioning here that the optimality of the parallel scheme is no 
longer true when studying the discrimination of general (non-projective) POVMs. 
In the case of 
discrimination of POVMs with rank-one effects one may need to use adaptive 
discrimination scheme to obtain perfect distinguishability 
\cite{krawiec2020discrimination}.
While considering the discrimination of von Neumann measurements, one could
expect that we can improve the discrimination by performing some processing
based on the obtained measurement's labels. It appears that such processing will
not improve the discrimination.

We will focus on the minimum error discrimination in the 
parallel scenario and our goal will be to characterize the probability of 
correct discrimination after $N$ queries to the black box.
The first step is to extend Eq.~\eqref{eq:mainRESold} to the parallel 
setting. We   study the form of the optimal matrix $E$ in the parallel 
scheme. The following theorem, which proof is presented in Appendix 
\ref{app:parallel}, states that it has a tensor 
product form.

\begin{theorem}\label{th:parallel}
Let $N \in \mathbb{N}$, $U \in U_d$ and let $\PP_U$ be the corresponding von 
Neumann 
measurement on
$\C^d$. Then we have the following equality
\begin{equation}
\left\Vert  \PP_{ U^{\otimes N}} - \PP_{\1} \right\Vert_\diamond = \min_{E \in
\mathcal{DU}_d} \left\Vert  \Phi_{U^{\otimes N} E^{\otimes N}} - \Phi_{\1}
\right\Vert_\diamond.
\end{equation} 
\end{theorem}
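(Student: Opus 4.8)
The plan is to apply the single-shot characterization \eqref{eq:mainRESold} directly to the unitary $U^{\otimes N}\in\UU_{d^N}$, which gives
\[
\left\Vert \PP_{U^{\otimes N}} - \PP_\1 \right\Vert_\diamond = \min_{F\in\DU_{d^N}} \left\Vert \Phi_{U^{\otimes N}F} - \Phi_\1\right\Vert_\diamond ,
\]
and then to show that the minimiser may be taken of the product form $F=E^{\otimes N}$. Since $\{E^{\otimes N} : E\in\DU_d\}\subseteq\DU_{d^N}$, the inequality ``$\le$'' in the theorem is immediate, so the whole content is the reverse inequality. Replacing $U$ by $UE^\ast$, where $E^\ast$ realises $\Theta_0:=\min_E\Theta(UE)$ (this changes neither $\PP_{U^{\otimes N}}$ nor the right-hand side), I may assume $U$ is optimally phased, with eigenvalues in the arc $[-\Theta_0/2,\Theta_0/2]$. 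Using \eqref{eq:dimondNORMuni} together with the additivity of the spread under tensor powers, $\Theta((UE)^{\otimes N})=N\,\Theta(UE)$, the product-restricted minimum equals $2\sqrt{1-\cos^2(N\Theta_0/2)}$, so the theorem reduces to the uniform bound
\[
\nu\!\left(U^{\otimes N}F\right)\le \cos\!\left(\tfrac{N\Theta_0}{2}\right)\qquad\text{for every } F\in\DU_{d^N},
\]
i.e.\ to a lower bound of $N\Theta_0$ on the eigenvalue spread of $U^{\otimes N}F$.

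To obtain this bound uniformly in $F$ I would exhibit a single input state $\rho$ on $(\C^d)^{\otimes N}$ with $\left\Vert \diag(\rho\,U^{\otimes N})\right\Vert_1=\cos(N\Theta_0/2)$. Writing a multi-index $\vec{j}=(j_1,\dots,j_N)$ and $F=\diag(f_{\vec{j}})$, one then has $\nu(U^{\otimes N}F)\le |\tr(\rho\,U^{\otimes N}F)| = \big|\sum_{\vec{j}} f_{\vec{j}}\,[\diag(\rho U^{\otimes N})]_{\vec{j}}\big|\le \left\Vert \diag(\rho U^{\otimes N})\right\Vert_1$, which is exactly what is needed. The naive product choice $\rho=\sigma^{\otimes N}$ fails, since it yields only $\cos^{N}(\Theta_0/2)>\cos(N\Theta_0/2)$; the correct state must be genuinely correlated, which is the technical heart of the argument and the reason entanglement is indispensable.

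The state I would use is $\rho=\tfrac12\,\sigma_+^{\otimes N}+\tfrac12\,\sigma_-^{\otimes N}$, where $\sigma_\pm$ are density matrices supported on the extremal eigenspaces $S_\pm$ of $U$ (eigenvalue $\ee^{\pm \ii\Theta_0/2}$). Because $U$ acts as the scalar $\ee^{\pm \ii\Theta_0/2}$ on $S_\pm$, a direct computation gives $[\diag(\rho\,U^{\otimes N})]_{\vec{j}}=\cos(N\Theta_0/2)\prod_k g_{j_k}$, with $g_j:=\langle j|\sigma_+|j\rangle$, \emph{provided} $\sigma_\pm$ have equal diagonals, $\langle j|\sigma_+|j\rangle=\langle j|\sigma_-|j\rangle$ for all $j$. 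The mechanism is that the two phases $\ee^{\pm \ii N\Theta_0/2}$ combine into the common real factor $\cos(N\Theta_0/2)$, while the matched diagonals make $\prod_k g_{j_k}$ a bona fide probability distribution; the diagonal is therefore nonnegative and sums to $\cos(N\Theta_0/2)$, giving $\Vert\diag(\rho U^{\otimes N})\Vert_1=\cos(N\Theta_0/2)$.

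The main obstacle is thus the existence of matched states $\sigma_\pm$, which is precisely where optimality of $U$ must enter. I would prove it by separation: the sets $D_\pm=\{\diag(\sigma):\sigma\text{ a state on }S_\pm\}$ are convex compact subsets of the probability simplex, and were they disjoint, a separating diagonal $K$ would satisfy $\lambda_{\min}(P_{S_+}KP_{S_+})>\lambda_{\max}(P_{S_-}KP_{S_-})$; after a scalar shift, rephasing $U\mapsto U\ee^{-\ii\epsilon K}$ would push the top eigenvalues (on $S_+$) down and the bottom ones (on $S_-$) up, strictly shrinking the arc below $\Theta_0$ and contradicting $\Theta_0=\min_E\Theta(UE)$. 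Hence $D_+\cap D_-\ne\emptyset$ and the matched $\sigma_\pm$ exist. Finally I would treat the perfectly distinguishable regime $N\Theta_0\ge\pi$ separately: there $\cos(N\Theta_0/2)\le0$, and one instead takes $\rho=\sum_a c_a\,\sigma_+^{\otimes a}\otimes\sigma_-^{\otimes(N-a)}$; all such terms share the diagonal $\prod_k g_{j_k}$ but carry phases $\ee^{\ii(2a-N)\Theta_0/2}$ spanning an arc $\ge\pi$, so the weights $c_a$ can be chosen to annihilate the diagonal, recovering \eqref{eq:perfect_distinguishability} and $\Vert\PP_{U^{\otimes N}}-\PP_\1\Vert_\diamond=2$, which again matches the product-restricted value.
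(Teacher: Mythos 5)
Your core strategy coincides with the paper's: both proofs rest on the existence of states supported on the extremal eigenspaces of the optimally phased unitary with \emph{matching diagonals}, and on mixtures of tensor products of these states (your $\tfrac12\sigma_+^{\otimes N}+\tfrac12\sigma_-^{\otimes N}$ and your annihilating mixture play exactly the roles of the paper's $\rho_0$ and of its three-term state in Eq.~\eqref{pf-eq-1}). The differences are that the paper simply cites this existence result (Lemma~\ref{lem: state}, quoted from \cite{puchala2018strategies}), whereas you re-derive it by convex separation plus first-order perturbation of the eigenphases, and that you replace the appeal to the converse direction of that lemma by the elementary uniform certificate $\nu(U^{\otimes N}F)\le|\tr(\rho\,U^{\otimes N}F)|\le\left\Vert \diag(\rho\,U^{\otimes N})\right\Vert_1$ for all $F\in\DU_{d^N}$; both are clean, self-contained improvements (modulo filling in the degenerate perturbation-theory details for the compressions $P_{S_\pm}KP_{S_\pm}$).

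There is, however, a genuine gap in your last case. You fold the regime $\Theta_0=\Upsilon(U)\ge\pi$ (single-shot perfect distinguishability) into the branch $N\Theta_0\ge\pi$, and there the justification ``phases spanning an arc $\ge\pi$, so the weights $c_a$ can be chosen to annihilate the diagonal'' fails: zero lies in the convex hull of points on the circle only if the consecutive circular gaps are at most $\pi$, and your gaps equal $\Theta_0$. Concretely, let $U$ be the cyclic shift on $\C^d$; then $(UE)^d=\bigl(\prod_k E_{kk}\bigr)\1$ for every $E\in\DU_d$, so the eigenvalues of $UE$ are always equally spaced and $\Upsilon(U)=2\pi(d-1)/d>\pi$. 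For $N=2$ the three phases $\ee^{\ii(2a-N)\Theta_0/2}$, $a=0,1,2$, are $\ee^{\ii 2\pi/d},\,1,\,\ee^{-\ii 2\pi/d}$, which for $d\ge 5$ lie in an arc of length $4\pi/d<\pi$; no convex combination of them vanishes, so your construction produces no discriminator, even though the theorem holds there. The repair is precisely the paper's (omitted by you) first step: if $\left\Vert \PP_U-\PP_\1\right\Vert_\diamond=2$, then by Eq.~\eqref{eq:perfect_distinguishability} there is a state $\rho$ with $\diag(\rho U)=0$, and $\rho\otimes\tau^{\otimes(N-1)}$ certifies $\left\Vert \PP_{U^{\otimes N}}-\PP_\1\right\Vert_\diamond=2$, while $0\in W(UE)$ for all $E\in\DU_d$ forces $0\in W\bigl((UE)^{\otimes N}\bigr)$, so the product-restricted minimum equals $2$ as well. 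Once this case is split off, so that $\Theta_0<\pi$ throughout your argument, your proof goes through.
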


Now we are interested in calculating the number of usages of the black box
required for perfect discrimination. Let us recall here that in the case of
distinguishing unitary operations this can always be achieved in a finite number
of steps $N=\lceil\frac{\pi}{\Theta(U)}\rceil$~\cite{acin2001statistical}. A 
similar result is achievable in the case of distinguishing von Neumann
measurements. Let $UE_0$ be an \emph{optimal} unitary matrix, that is a matrix 
for which
\begin{equation}
\left\Vert  \PP_{U} - \PP_{\1} \right\Vert_\diamond=\left\Vert  
\Phi_{UE_0} - 
\Phi_{\1} \right\Vert_\diamond.
\end{equation} 
Now we compute the number of queries needed for perfect discrimination.
We have already proven in Theorem \ref{th:parallel} that if $UE_0$ is
the optimal matrix, then the matrix $(UE_0)^{\otimes N}$  will also be optimal.
Therefore, the value of diamond norm $\left\Vert
\PP_{U^{\otimes N}} - \PP_{\1} \right\Vert_\diamond$ can be expressed as a 
function of $\Theta \left( (UE_0)^{\otimes N}\right)$. 
As long as $0 \not\in W\left((UE_0)^{\otimes N}\right)$, it holds that
$\Theta((UE_0)^{\otimes N})=N\Theta(UE_0)$.
The first time zero enters the 
numerical range
$W\left((UE_0)^{\otimes N}\right)$ is therefore equal to $ N=\lceil
\frac{\pi}{\Upsilon (U)} \rceil$, where $\Upsilon(U)$ is an optimized version of
$\Theta(U)$ \ie

\begin{equation}\label{eq:palma}
\Upsilon (U) \coloneqq \min_{E
\in\mathcal{DU}_d} \Theta(UE).
\end{equation}

We summarize the above discussion as the Corollary below providing the value of 
$\left\Vert  \PP_{U^{\otimes N}} - \PP_{\1} \right\Vert_\diamond$ in terms of 
$\Theta(U)$.
\begin{corollary}\label{cor:finite-N}
Let $N \in \N$, $U \in \mathcal{U}_d$.
The following holds
\begin{enumerate}[(i)]
\item if $N \Upsilon(U) \geq \pi$, then  
$\Vert\PP_{U^{\otimes N}} - \PP_\1 \Vert_\diamond = 2$;
\item if $N \Upsilon(U) < \pi$, then  
$\Vert \PP_{U^{\otimes N}} - \PP_\1 \Vert_\diamond = 2 \sin \left(\frac{N}{2} 
\Upsilon(U)\right)$.
\end{enumerate}
\end{corollary}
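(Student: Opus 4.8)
The plan is to reduce the claim to the known parallel-discrimination theory for \emph{unitary} channels and then optimise over the diagonal correction. First I would combine Theorem~\ref{th:parallel} with the identity $U^{\otimes N}E^{\otimes N}=(UE)^{\otimes N}$ to obtain
\begin{equation}
\left\Vert \PP_{U^{\otimes N}}-\PP_\1\right\Vert_\diamond
=\min_{E\in\mathcal{DU}_d}\left\Vert \Phi_{(UE)^{\otimes N}}-\Phi_\1\right\Vert_\diamond ,
\end{equation}
so that the task becomes: for each fixed $V=UE$, evaluate $\left\Vert \Phi_{V^{\otimes N}}-\Phi_\1\right\Vert_\diamond$, and then minimise over $E$.

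Next I would establish the following formula for a fixed unitary $V$, which is the heart of the argument:
\begin{equation}
\left\Vert \Phi_{V^{\otimes N}}-\Phi_\1\right\Vert_\diamond=
\begin{cases}
2\sin\!\left(\tfrac{N}{2}\Theta(V)\right), & N\Theta(V)<\pi,\\[2pt]
2, & N\Theta(V)\ge\pi.
\end{cases}
\end{equation}
This I would derive from \eqref{eq:dimondNORMuni}, which gives $\left\Vert \Phi_{V^{\otimes N}}-\Phi_\1\right\Vert_\diamond=2\sqrt{1-\nu^2}$ with $\nu=\min_{x\in W(V^{\otimes N})}|x|$. Since $V^{\otimes N}$ is unitary, hence normal, its numerical range is the convex hull of its eigenvalues $\prod_{k}\lambda_{i_k}$. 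After a global phase I may place the spectrum of $V$ in the arc $[-\Theta(V)/2,\Theta(V)/2]$ with the extreme eigenvalues $\lambda_1,\lambda_d$ at its endpoints; then $\lambda_1^N,\lambda_d^N$ sit at angles $\mp \tfrac{N}{2}\Theta(V)$. While $N\Theta(V)<\pi$ the hull misses the origin and its nearest point is the midpoint of the chord joining $\lambda_1^N$ and $\lambda_d^N$, at distance $\cos(\tfrac{N}{2}\Theta(V))$, giving $\nu=\cos(\tfrac{N}{2}\Theta(V))$ and the first line. Once $N\Theta(V)\ge\pi$, the eigenvalues $\lambda_1^{a}\lambda_d^{N-a}$ for $a=0,\dots,N$ have consecutive angular gaps equal to $\Theta(V)$ and a wrap-around gap $2\pi-N\Theta(V)\le\pi$, so their convex hull already surrounds the origin, forcing $\nu=0$ and the second line. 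This is precisely the Ac\'in-type phenomenon for unitaries~\cite{acin2001statistical}; the borderline case $\Theta(V)>\pi$ is immediate because then $0\in W(V)$ already, i.e. single-shot discrimination is perfect.

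Finally I would carry out the minimisation over $E$. For $N\Theta(V)<\pi$ the value $2\sin(\tfrac{N}{2}\Theta(V))$ is strictly increasing in $\Theta(V)$, since the argument lies in $(0,\pi/2)$, and it is capped at $2$ beyond; hence the minimum over $E$ is attained by making $\Theta(UE)$ as small as possible, namely $\Theta(UE_0)=\min_{E}\Theta(UE)=\Upsilon(U)$ from \eqref{eq:palma}. If $N\Upsilon(U)\ge\pi$ then every $E$ satisfies $N\Theta(UE)\ge N\Upsilon(U)\ge\pi$, so every term equals $2$ and the minimum is $2$, which is case~(i); if $N\Upsilon(U)<\pi$ the minimum is attained at $E_0$ and equals $2\sin(\tfrac{N}{2}\Upsilon(U))$, which is case~(ii) (the two expressions agree at the threshold $N\Upsilon(U)=\pi$). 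I expect the main obstacle to be the geometric lemma of the second paragraph---pinning down the position of the origin relative to the convex hull of the tensor-power spectrum for every $N$, in particular the precise transition at $N\Theta(V)=\pi$ and the $\Theta(V)>\pi$ edge case; the remaining steps are monotonicity of $\sin$ and routine bookkeeping.
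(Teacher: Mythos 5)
Your proposal is correct and follows essentially the same route as the paper's own discussion preceding the corollary: reduce to unitary channels via Theorem~\ref{th:parallel}, apply Eq.~\eqref{eq:dimondNORMuni} together with the arc geometry of the spectrum of $(UE)^{\otimes N}$ (linear growth of the arc until the origin enters the numerical range), and minimize over $E$ to arrive at $\Upsilon(U)$. The only cosmetic difference is that the paper fixes the single-shot optimal $E_0$ first, invoking the optimality of $E_0^{\otimes N}$ established inside the proof of Theorem~\ref{th:parallel}, whereas you evaluate the norm for every fixed $V=UE$ and then use monotonicity of $\sin$ to minimize; both arguments rest on the same ideas.
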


Another interesting property resulting from Theorem~\ref{th:parallel} and its
proof is the amount of the discriminator's entanglement with environment. The 
minimal dimension of an auxiliary system needed for optimal
discrimination is equal to the rank of the input state. We found out that for 
the majority of von Neumann measurements it is sufficient when the dimension of 
the auxiliary system is either two or three. A more detailed discussion is 
presented in Appendix \ref{app:parallel} after the proof of 
Theorem~\ref{th:parallel}.

Eventually, we are in the position to prove the optimality of parallel
discrimination scheme, which we present as the following theorem.

\begin{theorem} \label{th:parallel-optimal}
Let $U\in \UU_d$.
Consider the distinguishability of general quantum network with $N$ uses of the 
black box in which 
there is one of two measurements - either $\PP_U$ or $\PP_\1$.
Then the probability of correct distinction cannot be better then
in the parallel scenario.
\end{theorem}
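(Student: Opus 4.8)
The plan is to show that any adaptive $N$-query network distinguishing $\PP_U$ from $\PP_\1$ can be simulated by---or is no better than---a parallel network, and then invoke the characterization of the parallel diamond norm from Theorem~\ref{th:parallel}. The central difficulty is that in the adaptive scheme the experimenter reads the \emph{classical} outcome of each measurement and chooses the next intervening channel based on it (cf.~Fig.~\ref{fig:adaptive}), so the measurements are genuinely classical channels and we cannot naively appeal to the unitary-channel argument of~\cite{chiribella2008memory}. I would first set up the most general adaptive network: an initial state $\rho_0$ on the principal register tensored with an ancilla, followed by alternating applications of the black-box measurement $\PP_W$ (with $W\in\{U,\1\}$) and arbitrary quantum channels $\Lambda_1,\ldots,\Lambda_{N-1}$ that may be controlled by the accumulated classical labels, ending in a final two-outcome measurement guessing $W$.

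The key structural observation I would exploit is that since $\PP_W(\sigma)=\sum_i \tr(W^\dagger\!\proj{i}W\,\sigma)\proj{i}$ outputs a classical label $i$, conditioning the subsequent channel on $i$ is equivalent to running a single channel that acts coherently but then dephases in the outcome basis. Concretely, the whole adaptive protocol can be rewritten as a \emph{comb}/quantum network, and the maximal distinguishing probability it achieves is governed by the diamond-norm distance of the induced $N$-query supermaps applied to $\PP_U$ versus $\PP_\1$. The crucial point is that each use of $\PP_W$ factors as $\Delta\circ\Phi_{W^\dagger}$-type structure (measure-and-label), and because the labels are copied into a classical register that can be retained to the end, one loses nothing by postponing all the intermediate processing: the classical information from earlier rounds is never destroyed, so an adaptive controller can be replaced by a parallel query to $\PP_{W^{\otimes N}}$ followed by a single global processing channel. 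This reduces the optimal adaptive probability to
\begin{equation}
\poptent^{\mathrm{adapt}}(\PP_U,\PP_\1) \leq \frac12 + \frac14\left\Vert \PP_{U^{\otimes N}} - \PP_\1\right\Vert_\diamond,
\end{equation}
with the right-hand side being exactly the parallel value by the definition of the diamond norm and Helstrom's bound.

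To make the reduction rigorous I would argue that the measurement-and-record step commutes, in the relevant sense, with the choice of later interventions: writing each black-box use as a Stinespring-type isometry into the principal space plus a classical record register, the adaptive channels $\Lambda_k$ act only on the principal and ancilla registers while the records accumulate untouched. Since the final decision can read all records, any advantage a controller gains by branching on label $i$ at step $k$ is already available to a parallel scheme that defers all processing to the end, because the parallel scheme retains the same complete classical record $(i_1,\ldots,i_N)$. Hence the set of measurement statistics reproducible adaptively is contained in that reproducible in parallel, which yields the inequality.

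The main obstacle I anticipate is making precise the claim that conditioning later operations on intermediate classical outcomes confers no advantage---this is exactly where rank-one POVMs differ (as noted via~\cite{krawiec2020discrimination}), so the argument must use that $\PP_U,\PP_\1$ are \emph{projective and fine-grained}. The feature that saves us is that a von Neumann measurement's outcome together with the post-measurement collapsed state $\proj{i}$ carries no more usable coherence than the label itself; the post-measurement state is classical (diagonal), so the only information propagated forward is the classical label, which a parallel scheme preserves anyway. I would therefore emphasize that intermediate channels cannot exploit coherence that the projective measurement has already destroyed, and close the argument by combining this with Theorem~\ref{th:parallel} to identify the parallel optimum with $\min_{E\in\mathcal{DU}_d}\Vert\Phi_{U^{\otimes N}E^{\otimes N}}-\Phi_\1\Vert_\diamond$.
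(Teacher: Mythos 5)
You correctly identify the deferred-measurement idea (a channel classically controlled on a computational-basis outcome equals a coherent controlled operation followed by dephasing), but the step you build on top of it is a genuine gap. You claim that ``one loses nothing by postponing all the intermediate processing,'' so that an adaptive controller can be replaced by a parallel query to $\PP_{W^{\otimes N}}$ followed by a single global post-processing channel, and you justify this by the fact that the classical labels are never destroyed. This is false as a general principle: the intermediate channels do not merely act on the labels, they determine the \emph{inputs} to the later queries, whereas a parallel scheme must fix all $N$ inputs (as one possibly entangled state) before any label exists. Postponing the processing is therefore not a statistics-preserving rewriting, and ``adaptive statistics are contained in parallel statistics'' is simply untrue --- only the optimal discrimination \emph{values} coincide, which is exactly what has to be proven. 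Worse, the feature you invoke to rescue the argument --- that the post-measurement data is purely classical --- holds verbatim for arbitrary rank-one POVMs, whose channels $\sigma \mapsto \sum_i \tr(M_i \sigma)\proj{i}$ also output nothing but a label; yet for rank-one POVMs adaptive schemes can strictly outperform parallel ones \cite{krawiec2020discrimination}, as the paper itself notes. Your argument would apply to that case unchanged, so it proves too much and cannot be correct as it stands.

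The ingredient you explicitly discarded at the outset is the one the paper actually uses. After Stinespring-dilating the intermediate processing to unitaries, the deferred-measurement trick lets one write the whole adaptive network as $\Psi_W = \left( \Delta_{1,\ldots,N} \otimes \1 \right) \Phi_{A_W}$, where $A_W$ is a genuinely \emph{unitary} comb (Appendix~\ref{app:au}): uses of $W \in \{U, \1\}$ interlaced with coherent controlled unitaries, with a single dephasing at the very end. Contractivity of the trace norm under the final dephasing gives $\max_\rho \left\Vert (\Psi_U - \Psi_\1)(\rho) \right\Vert_1 \leq \max_\rho \left\Vert (\Phi_{A_U} - \Phi_{A_\1})(\rho) \right\Vert_1$, and at this point one \emph{can} legitimately invoke the memory result of \cite{chiribella2008memory} for sequential discrimination of unitary channels --- this is where the real comparison of adaptive versus parallel is done, and it is special to unitaries, which is precisely why the theorem fails for rank-one POVMs. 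Finally, choosing $U$ in the optimal form of Eq.~\eqref{eq:palma}, i.e. $\Upsilon(U) = \Theta(U)$, Theorem~\ref{th:parallel} identifies $\left\Vert \Phi_{U^{\otimes N}} - \Phi_\1 \right\Vert_\diamond$ with $\left\Vert \PP_{U^{\otimes N}} - \PP_\1 \right\Vert_\diamond$, closing the chain. Your proposal is missing any replacement for the \cite{chiribella2008memory} step; without it, the claim that steering later inputs on earlier outcomes cannot help remains unsupported.
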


\begin{proof}
Without loss of generality we may assume that the processing is performed 
using
only unitary operations. Indeed, using Stinespring dilation theorem, any 
channel
might be represented via a unitary channel on a larger system followed by 
the
partial trace operation. What is left to observe is that $\left\Vert
\tr_{B}(X_{AB}) \right\Vert_1 \le \left\Vert X_{AB} \right\Vert_1$ for 
arbitrary
bipartite matrix $X_{AB}$.
\begin{figure*}[t]
	\includegraphics[width=\textwidth]{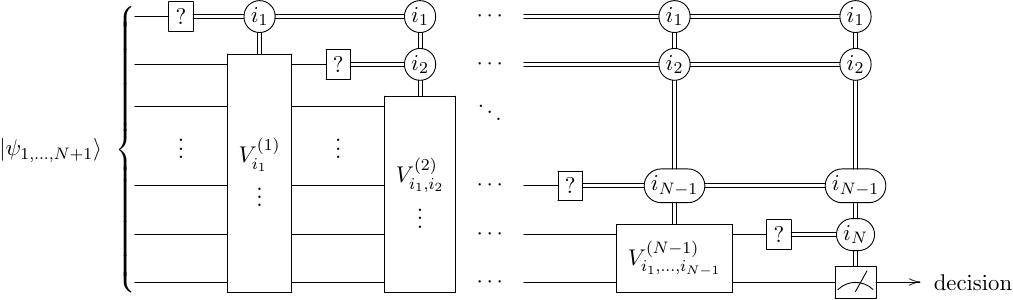}
	\caption{Schematic depiction of the sequential scheme. In the
		$k$\textsuperscript{th} step, after obtaining the label $i_k$, we 
		utilize all 
		labels $i_1,\ldots,i_k$ to modify the remaining parts of the state 
		in the hope 
		of improving distinguishability.}\label{fig:sequential_scheme}
\end{figure*}
	
The sequential scheme of discrimination of von Neumann measurements is shown in
Fig.~\ref{fig:sequential_scheme} and can be expressed as a channel 
\begin{equation}
\Psi_{U} = \left( \Delta_{1, \ldots , N} \otimes \1 \right) \Phi_{A_U},
\end{equation}
associated with a matrix $A_U$. Here $\Delta_{1, \ldots , N}$ is the dephasing
channel on subsystems $1,\ldots,N$. The channel $\Phi_{A_U}$ is shown in
Fig.~\ref{fig:au} and the exact form of this transformation can be found in
Appendix~\ref{app:au}.
	
\begin{figure*}[t]
	\includegraphics[width=\textwidth]{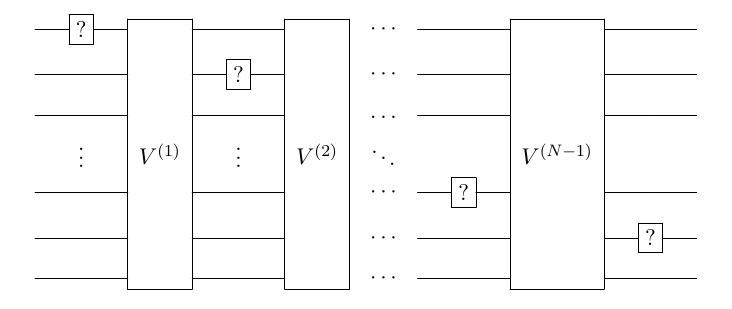}
	\caption{Visualization of the channel $\Phi_{A_U}$. Here $V^{(k)} = 
		\sum_{i_1,\ldots,i_k} \proj{i_1,\ldots,i_k} \otimes 
		V^{(k)}_{i_1,\ldots,i_k}$.}\label{fig:au}
\end{figure*}

Assuming that matrix $U$ is optimal, that is $\Upsilon (U)=\Theta(U)$, we have 
$\left\Vert \PP_{U^{\otimes N}} - \PP_\1 \right\Vert_\diamond = \left\Vert 
\Phi_{U^{\otimes N}} - \Phi_{\1}  \right\Vert_\diamond$. Hence, we 
may calculate the distance between $\Psi_U$ 
and 
$\Psi_\1$ as
\begin{equation}\label{eq: sequential}
\begin{split}
&\max_{\rho}  \left\Vert \left(\Psi_U - \Psi_\1\right)(\rho) \right\Vert_1\\
&= \max_{\rho} \left\Vert \left[ \left( \Delta_{1, \ldots , N} \otimes \1   
\right)     
\left( \Phi_{A_U} - \Phi_{A_\1} \right)\right] (\rho) \right\Vert_1 \\
&\leq \max_{\rho} \left\Vert \left(\Phi_{A_U} - \Phi_{A_\1}\right)(\rho) 
\right\Vert_1 \\
&\leq \max_{\rho} \left\Vert \left( \Phi_{U^{\otimes N} 
\otimes 
	\1} - \Phi_{\1} \right)(\rho)
\right\Vert_1\\
&= \left\Vert \Phi_{U^{\otimes N}} - \Phi_{\1}  \right\Vert_\diamond
= \left\Vert \PP_{U^{\otimes N}} - \PP_\1 \right\Vert_\diamond,
\end{split}
\end{equation}
where we maximize over states $\rho$ of appropriate dimensions. The induced 
trace norm is monotonically decreasing under the action of channels and 
this 
gives us first inequality. The second
one follows from the optimality of the parallel scheme of distinguishing
unitary channels~\cite{chiribella2008memory}. Therefore, the adaptive 
scenario does not give any advantage over the parallel scheme.
\end{proof}

\subsection{Discrimination of random measurements}
The structural characterization of multiple-shot discrimination of von Neumann
measurements given above allows us to draw strong conclusions about
distinguishability of  generic pairs of von Neumann measurements. In this work 
we restrict our attention to pairs of measurements distributed independently
according to the natural distribution coming from the Haar measure
$\mu(\UU_d)$~\cite{bengtsson2017geometry}.

\begin{theorem}\label{th:random} 
Consider two independently distributed Haar-random von Naumann
measurements on $\C^d$, i.e. $\PP_{U},\PP_V$, where  $U \sim \mu(\UU_d),\ V \sim
\mu(\UU_d)$. Let $\poptent(\PP_{U^{\otimes 2}},\PP_{V^{\otimes 2}})$ be the
optimal probability of discrimination measurements $\PP_U$ and $\PP_V$ using two
queries and assistance of entanglement. Then, we have the following bound
\begin{equation}
\underset{U,V \sim \mu(\UU_d)}{\mathrm{Pr}} \left( \poptent(\PP_{U^{\otimes
2}},\PP_{V^{\otimes 2}}) <1 \right)\leq \frac{1}{2^{d-1}}.
\end{equation}
In other words, in the limit of large dimensions $d$, typical Haar-random von 
Neumann measurements are perfectly distinguishable with the usage of two 
queries and assistance of entanglement (the probability that they cannot be 
perfectly distinguished is exponentially suppressed as a function of $d$).
\end{theorem}
\begin{proof}
From the unitary invariance of the Haar measure and the symmetry of the 
problem of measurement discrimination it follows that the distribution of  the 
random variable  $\poptent(\PP_{\1^{\otimes 2}},\PP_{U^{\otimes 2}})$ is 
identical to that of $\poptent(\PP_{U^{\otimes 2}},\PP_{V^{\otimes 2}})$. 
Consequently, we have
\begin{equation}\label{eq:equalityOFprob}
\begin{split}
&\underset{U,V \sim \mu(\UU_d)}{\mathrm{Pr}} \left( \poptent(\PP_{U^{\otimes 
2}},\PP_{V^{\otimes 2}})<1  \right)\\ & =\underset{U \sim 
\mu(\UU_d)}{\mathrm{Pr}} 
\left( \poptent(\PP_{\1^{\otimes 2}},\PP_{U^{\otimes 2}})<1  \right).
\end{split}
\end{equation}
From Corollary~\ref{cor:finite-N} it follows that the condition $\Vert \PP_{U} -
P_\1 \Vert_\diamond \geq \sqrt{2}$ implies  $\Vert \PP_{U^{\otimes 2}} -
P_{\1^{\otimes 2}} \Vert_\diamond =2$ and consequently we have also
perfect discrimination of two copies of measurements: $\poptent(\PP_{U^{\otimes
2}},\PP_{\1^{\otimes 2}})=1$. Therefore we have
\begin{equation}
\begin{split}
&\underset{U \sim \mu(\UU_d)}{\mathrm{Pr}}\left( \poptent (\PP_{U^{\otimes 
2}},\PP_{\1^{\otimes 
2}}) <1   \right) \\ &\leq 
\underset{U \sim \mu(\UU_d)}{\mathrm{Pr}}\left( \Vert \PP_{U} - \PP_\1 
\Vert_\diamond \leq \sqrt{2}   \right).
\end{split}
\end{equation}
Using now the characterization given in Eq.\eqref{eq:mainRESold} in conjunction 
with the formula in Eq.\eqref{eq:dimondNORMuni}  we obtain $\Vert \PP_{U} - 
P_\1 \Vert_\diamond \geq 2\sqrt{1-|U_{11}|^2}$ (note that 
$U_{11}=\tr(\ketbra{1}{1} 
U)\in W(U)$). Using this and simple algebra we get
\begin{equation}
\begin{split}
&\underset{U \sim \mu(\UU_d)}{\mathrm{Pr}}\left( \poptent ( \PP_{U^{\otimes
2}},\PP_{\1^{\otimes 2}}) <1   \right) \\ &\leq \underset{U \sim
\mu(\UU_d)}{\mathrm{Pr}}\left( |U_{11}|^2 \geq \frac{1}{2}  \right).
\end{split}
\end{equation}
The right-hand side of the above inequality can be computed exactly using the
property that for Haar-distributed $U$ the random variable $X=|U_{11}|^2$ is
distributed according to the beta distribution $p(X)=(d-1)(1-X)^{d-2}$ (see for
instance Eq.~(9) in~\cite{zyczkowski2000truncations}). The simple integration
gives $(1/2)^{d-1}$, which together with Eq.\eqref{eq:equalityOFprob} gives the
claimed result.
\end{proof}

\section{Unambiguous discrimination}\label{sec:unambiguousDISC} The unambiguous
discrimination of measurements $\PP_\1$ and $\PP_U$ can be understood as
unambiguous discrimination \cite{DIEKS1988} of states generated by the
corresponding channels. Specifically, for a fixed input state $\sigma$, the
output  states 
$\left(\PP_\1 \otimes \1\right) (\sigma), \left(\PP_U \otimes \1\right) 
(\sigma)$ 
can be unambiguously
discriminated using the measurement strategy $\MM=(M_\1, M_U, M_{?})$, where the
first two effects represent conclusive answers and  the last one corresponds to
the inconclusive
output of the procedure. For equal a priori probabilities of occurrence of
$\PP_\1$ and $\PP_U$, as well as fixed $\sigma$ (possibly entangled) and 
$\MM$,  the success
probability is given by
\begin{equation}\label{eq:unambSUCC}
\begin{split}
&p_\mathrm{u}\left(\PP_\1, \PP_U; \sigma, \MM \right) \\
&= \frac{1}{2} 
\tr(M_\1 (\PP_\1 \otimes \1) (\sigma))+ \frac{1}{2}\tr(M_U (\PP_U \otimes \1) 
(\sigma)),
\end{split}
\end{equation}
where additionally the unambiguity condition has to be satisfied:
\begin{equation}\label{eq:unambSTATEmes}
\tr(M_U (\PP_\1 \otimes \1) (\sigma))= \tr(M_\1 (\PP_U \otimes \1) (\sigma))=0.
\end{equation}

The optimal success probability of unambiguous discrimination of measurements
$\PP_\1,\PP_U$ can be now defined as the maximum of \eqref{eq:unambSUCC} over 
all strategies. Formally, we have
\begin{equation}
\begin{split}
&\pent\left(\PP_\1, \PP_U\right) \\ &\eqdef \max_{\sigma \in \DD(\C^d \otimes
\C^{d'})}\max_{\MM\in \mathrm{POVM}(\C^d \otimes \C^{d'})}
p_\mathrm{u}\left(\PP_\1, \PP_U; \sigma, \MM \right),
\end{split}
\end{equation}
where $\MM\in \mathrm{POVM}(\C^d \otimes \C^{d'})$ is a three-outcome 
measurement on $\C^d \otimes \C^{d'}$ 
that satisfies constrains \eqref{eq:unambSTATEmes} end $\sigma$ is a state on 
the extended Hilbert
space $\C^d \otimes \C^{d'}$. 

\subsection{Unambiguous discrimination with assistance of entanglement} 
In this subsection we present our main result which gives the
probability of unambiguous discrimination with the use of entanglement
This is presented as Theorem~\ref{thm:ENTunamb} while its proof is postponed to 
Appendix~\ref{app:ENTunamb}. Aside from giving a simple expression
for this probability, this result reduces the problem of unambiguous
measurement discrimination to a convex optimization task and gives a simple
relationship between the diamond norm and the probability of unambiguous
discrimination. 

\begin{theorem}\label{thm:ENTunamb}
The optimal success probability of unambiguous discrimination between von
Neumann measurements $\PP_\1$ and $\PP_U$ is given by
\begin{equation}
\pent(\PP_{\1},\PP_{U})=1-\min_{\rho \in \mathcal{D}(\mathbb{C}^d)}\sum_i 
|\bra{i} \rho U\ket{i}|.
\end{equation}
\end{theorem}
\begin{figure}[!h]
\includegraphics[width=.48\textwidth]{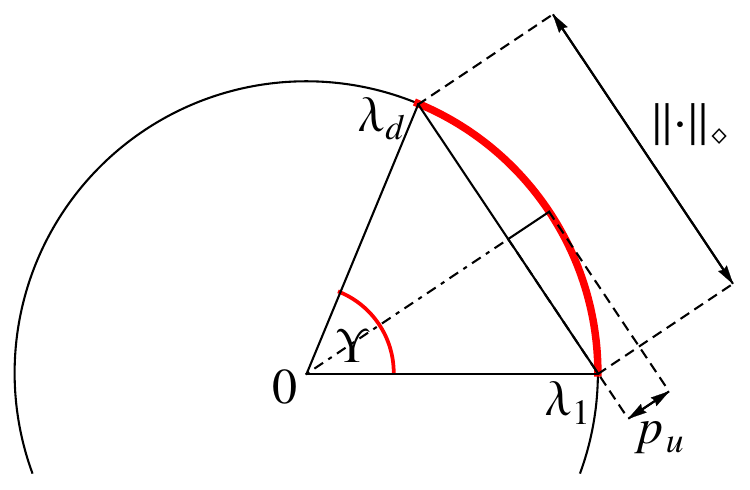} \caption{Schematic
depiction of the relationship between the diamond norm $\|
\PP_U-\PP_\1\|_\diamond$, the probability of unambiguous discrimination
$\pent(\PP_{\1},\PP_{U})$ and $\Upsilon(U)$.}\label{fig:geometric1}
\end{figure}
The results coming from Theorem~\ref{th:parallel} and 
Theorem~\ref{thm:ENTunamb} give a nice geometric 
interpretation for the relationship between
the diamond norm and the probability of unambiguous discrimination. This is
depicted in Fig.~\ref{fig:geometric1}. We start with a von Neumann measurement
in a basis given by some unitary matrix $U$ and try to distinguish it from the
measurement in the computational basis. We assume that $U$ is optimal and 
denote $U$'s eigenvalues as
$\lambda_1, \ldots, \lambda_d$ ordered according to their phases and use the 
symbol $\Upsilon(U)$ to denote the angle between two most distant eigenvalues 
$\lambda_1$ and $\lambda_d$. 
The dependence of the 
diamond norm and probability of unambiguous discrimination is clearly shown. 

\begin{remark}\label{rem:ua-parallel}
The above calculations can be easily extended to the case of parallel 
discrimination scheme. It suffices to substitute $U$ with $U^{\otimes N}$ and 
then we obtain that
\begin{equation}
\pent(\PP_{U^{\otimes N}}, \PP_\1) = 1 - \min_{\rho\in 
\mathcal{D}\left(\mathbb{C}^{d^N}\right)} \sum_{i} |\bra{i} \rho 
U^{\otimes N} 
\ket{i}|.
\end{equation}
\end{remark}

Basing on Remark~\ref{rem:ua-parallel}, we note that the angle $\Upsilon(U)$
increases in the multiple-shot case with the number of queries. This is 
depicted 
in Fig.~\ref{fig:geometric2} for two- and three-shots scenarios.
\begin{figure}[!h]
\centering
\begin{subfigure}{0.48\textwidth}
\includegraphics[width=\textwidth]{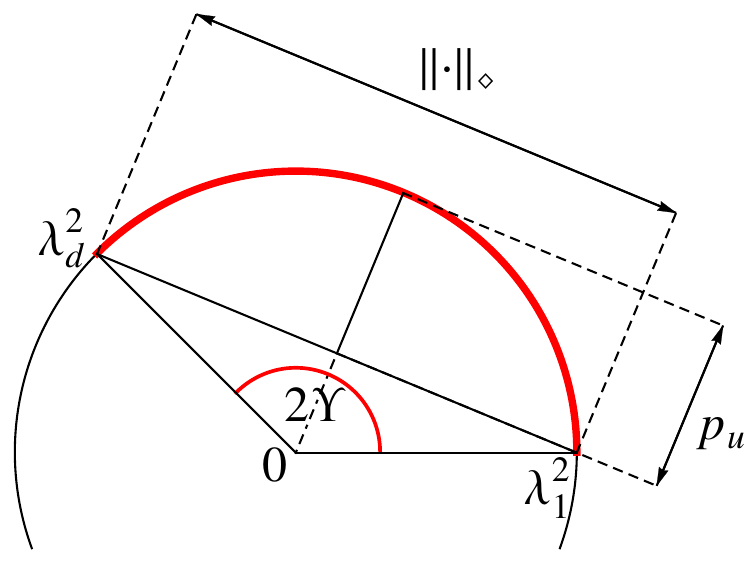}
\end{subfigure}
\begin{subfigure}{0.48\textwidth}
\includegraphics[width=\textwidth]{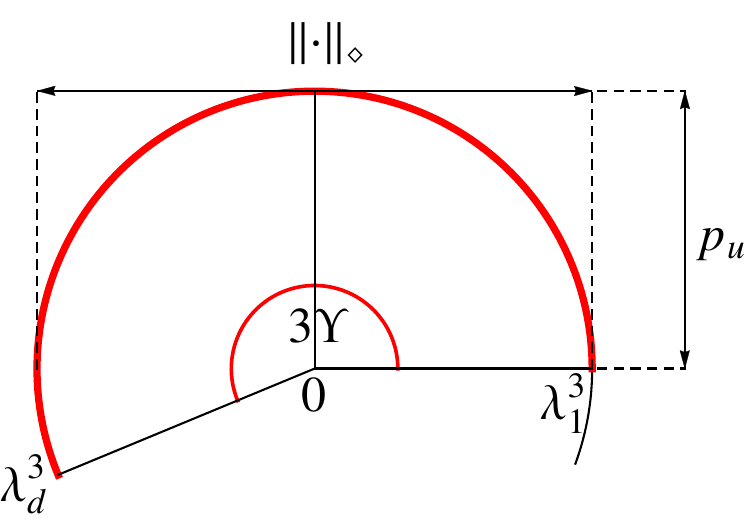}
\end{subfigure}
\caption{Figure similar to Fig.~\ref{fig:geometric1} which represents two- 
(left) and three- (right) shots scenario.} \label{fig:geometric2}
\end{figure}

In the general scheme we are
allowed to use conditional unitary transformations $\{V^{(k)}\} $ after each
measurement, thus our setting for discrimination is the same as presented in 
Fig.~\ref{fig:sequential_scheme} and Fig.~\ref{fig:au}. Similarly to the 
multiple-shot minimum error discrimination, we will show that adaptive 
discrimination scheme does not give any advantage 
over the parallel one for unambiguous discrimination. We state this formally 
in the 
following theorem, which proof is moved to Appendix~\ref{app:parallel-unam}.

\begin{theorem} \label{th:parallel-optimal-unamb}
Let $U\in \UU_d$.
Consider the unambiguous discrimination of general quantum network with $N$ 
uses of the 
black box in which 
there is one of two measurements - either $\PP_U$ or $\PP_\1$.
Then the probability of correct distinction cannot be better then
in the parallel scenario.
\end{theorem}

\subsection{Unambiguous discrimination without assistance of entanglement} 
In this subsection we provide a brief discussion on a special case of 
unambiguous discrimination without the utilization of entangled states.
We will use the following notation. 
Let $\Gamma, \Lambda \subset \{1,\ldots,d\}$. 
For given unitary matrix $U$, we define $P_\Gamma 
\coloneqq \sum_{i \in \Gamma} \proj{i}$ and $Q_\Lambda \coloneqq U P_\Lambda 
U^\dagger$. 
We set $\mathbb{P}_{\Gamma,\Lambda}$ 
to be the orthogonal projector onto $\textrm{Span}
\left(\{ U\ket{i}\}_{i \in \Gamma^c} \right) \cap \textrm{Span} \left(\{
\ket{j}\}_{j \in \Lambda^c} \right)$, where $\Gamma^c,\Lambda^c$ denote the 
complements of $\Gamma$ and $\Lambda$ respectively. 
The following theorem states the optimal success probability of unambiguous 
discrimination without the use of entanglement. 

\begin{theorem}\label{thm:entFREEunamb}
The optimal success probability of unambiguous discrimination, 
without the use of entanglement, between von Neumann measurements $\PP_\1$ and 
$\PP_U$ is given by
\begin{equation}
\begin{split}
&\p\left(\PP_\1, \PP_U\right) \\ &= \frac12 \max_{\Gamma, \Lambda 
\subset\{1,\dots ,
d\} : \Gamma \cap \Lambda = \emptyset} \Big\|\mathbb{P}_{\Gamma,\Lambda} (
P_\Gamma +Q_\Lambda)  \mathbb{P}_{\Gamma,\Lambda} \Big\|\,
\end{split}
\end{equation}
with $P_\Gamma,Q_\Lambda,\mathbb{P}_{\Gamma,\Lambda} $ defined as above.
\end{theorem}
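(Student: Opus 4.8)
The plan is to prove the formula through a short chain of reductions, most of which were prepared in the discussion preceding the statement; what remains is to assemble them and to justify carefully the two points where optimality is claimed without loss of generality. First I would fix the input state $\sigma \in \DD(\C^d)$ and note that both output states $\PP_\1(\sigma)$ and $\PP_U(\sigma)$ are diagonal in the computational basis. Hence, by Eq.~\eqref{eq:aux1}, replacing any feasible three-outcome POVM $\MM=(M_\1,M_U,M_?)$ by its dephased version $\tilde\MM=(\diag(M_\1),\diag(M_U),\diag(M_?))$ changes neither the success probability~\eqref{eq:unambSUCC} nor the unambiguity constraints~\eqref{eq:unambSTATEmes}. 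So for each fixed $\sigma$ I may restrict to measurements whose conclusive effects are diagonal.

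The next step, still with $\sigma$ fixed, is to argue that the conclusive effects can be taken to be the projectors $\Pi_\Gamma,\Pi_\Delta$ onto disjoint index subsets $\Gamma,\Delta\subset\{1,\dots,d\}$. Since $\tr(M_\1\PP_U(\sigma))=0$ with $M_\1,\PP_U(\sigma)\ge 0$ forces $\mathrm{supp}(M_\1)\perp\mathrm{supp}(\PP_U(\sigma))$, and likewise for $M_U$, each conclusive effect is supported only on allowed computational-basis indices; because the objective $\tr(M_\1\PP_\1(\sigma))+\tr(M_U\PP_U(\sigma))$ is linear and increasing in the diagonal entries over the polytope $\{a_i,b_i\ge 0,\ a_i+b_i\le 1\}$, its maximum is attained at a vertex, giving $M_\1=\Pi_\Gamma$, $M_U=\Pi_\Delta$ with $\Gamma\cap\Delta=\emptyset$ and $M_?=\Pi_{(\Gamma\cup\Delta)^c}\ge 0$. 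With $\Theta_A=U\Pi_A U^\dagger$ the unambiguity conditions become $\tr(\Theta_\Gamma\sigma)=0$ and $\tr(\Pi_\Delta\sigma)=0$, i.e.\ $\sigma\perp\Theta_\Gamma$ and $\sigma\perp\Pi_\Delta$, and the success probability reduces to Eq.~\eqref{eq:p_unambiguous}.

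It then remains to carry out the two outer maximizations, which I would reorganize as $\max_{\Gamma,\Delta}\max_{\sigma}$. For fixed disjoint $\Gamma,\Delta$ the constraints $\sigma\perp\Theta_\Gamma$ and $\sigma\perp\Pi_\Delta$ confine $\mathrm{supp}(\sigma)$ to the intersection of $\textrm{Span}(\{U\ket i\}_{i\in\Gamma^c})$ and $\textrm{Span}(\{\ket j\}_{j\in\Delta^c})$, which is exactly the range of $\mathbb{P}_{\Gamma,\Delta}$. Writing $\sigma=\mathbb{P}_{\Gamma,\Delta}\,\sigma\,\mathbb{P}_{\Gamma,\Delta}$ and using cyclicity, the objective becomes $\frac{1}{2}\tr\!\big(\mathbb{P}_{\Gamma,\Delta}(\Pi_\Gamma+\Theta_\Delta)\mathbb{P}_{\Gamma,\Delta}\,\sigma\big)$. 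Since $\Pi_\Gamma+\Theta_\Delta\ge 0$, the compression $\mathbb{P}_{\Gamma,\Delta}(\Pi_\Gamma+\Theta_\Delta)\mathbb{P}_{\Gamma,\Delta}$ is positive semidefinite, and the maximum of $\tr(H\sigma)$ over density operators supported in a fixed subspace equals the largest eigenvalue of the compression, hence its operator norm; the maximizer is the rank-one projector onto a top eigenvector, which is an admissible input state. Maximizing finally over all disjoint pairs $\Gamma,\Delta$ yields the claimed expression.

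The step I expect to require the most care is the lossless reduction to index-set projectors together with the bookkeeping of the unambiguity constraints: one must check that passing to an extreme point of the diagonal-POVM polytope cannot violate the equality constraints~\eqref{eq:unambSTATEmes} (it cannot, since those constraints merely delete indices from the allowed supports and remain satisfied at vertices), and that the resulting sets are genuinely disjoint with $M_?\ge 0$. One must also verify that reordering the maximizations is legitimate, i.e.\ that the joint feasibility condition is symmetric in the roles of $(\Gamma,\Delta)$ and $\sigma$. The final analytic fact — that maximizing a linear functional of a state over a fixed support subspace returns the operator norm of the compressed positive operator — is routine, but I would state it explicitly, since it is precisely what converts the inner optimization over $\sigma$ into the closed-form norm appearing in the theorem.
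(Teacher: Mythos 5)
Your proposal is correct and takes essentially the same approach as the paper's own argument: dephase the POVM for a fixed input state, reduce the conclusive effects to projectors $\Pi_\Gamma,\Pi_\Delta$ onto disjoint index sets, translate the unambiguity constraints into $\sigma\perp\Theta_\Gamma$, $\sigma\perp\Pi_\Delta$, and turn the inner optimization over states supported in the range of $\mathbb{P}_{\Gamma,\Delta}$ into the operator norm of the compressed operator. The only difference is one of detail: you explicitly justify the vertex (polytope) argument for passing to projector effects and the linear-functional/operator-norm identity, both of which the paper asserts without proof.
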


The proof of this theorem is presented in Appendix~\ref{app:entFREEunamb}.

\begin{rem*}
The projector $\mathbb{P}_{\Gamma,\Lambda}$ projects onto the intersection of
supports of $P_{\Lambda^c}$ and $Q_{\Gamma^c}$. By the use of Theorem 4
from \cite{piziak1999constructing}, we can express the optimal probability of
unambiguous discrimination as
\begin{widetext}
\begin{equation}
\begin{split}
\p\left(\PP_\1, \PP_U\right) =  2\max_{\Gamma,\Lambda \subset\{1,\dots , 
d\} : 
\Gamma \cap 
\Lambda = \emptyset} 
\Big\|&
P_{\Lambda^c}
(P_{\Lambda^c}+Q_{\Gamma^c})^{-1}
Q_\Lambda 
(P_{\Lambda^c}+Q_{\Gamma^c})^{-1}\
P_{\Lambda^c}
\\&
+
Q_{\Gamma^c}
(P_{\Lambda^c}+Q_{\Gamma^c})^{-1}
P_\Gamma
(P_{\Lambda^c}+Q_{\Gamma^c})^{-1}\
Q_{\Gamma^c}
\Big\|,
\end{split}
\end{equation}
\end{widetext}
where $(\cdot)^{-1}$ denotes Moore-Penrose pseudo inverse~\cite{watroustheory}.
Moreover, the optimal input state is the one which gives the above norm.
\end{rem*}
\begin{corollary}
In the case of qubit measurements, the optimal probability of unambiguous
discrimination of $\PP_\1$ and $\PP_U$ is given by the
following (discontinuous) function
\begin{equation}
\p\left(\PP_\1, \PP_U\right) =
\begin{cases}
1 & \text{ if }  \  \ |U_{1,2}|^2 = 1\\
\frac12 |U_{1,2}|^2 & \text{ if }  \ \ |U_{1,2}|^2 < 1
\end{cases}\ .
\end{equation}
In both cases the optimal input state can be chosen to be $\ketbra{1}{1}$.
\end{corollary}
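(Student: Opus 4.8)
The plan is to specialize Theorem~\ref{thm:entFREEunamb} to $d=2$, where the maximum runs over the finitely many pairs of disjoint subsets $\Gamma,\Delta\subset\{1,2\}$, and simply to evaluate every pair. First I would record the two elementary consequences of unitarity of a $2\times 2$ matrix that drive the whole computation: orthonormality of rows and columns gives $|U_{11}|^2+|U_{12}|^2=1$ and $|U_{12}|^2=|U_{21}|^2$, so a single number $|U_{12}|^2$ governs everything, and the extreme value $|U_{12}|^2=1$ forces $U_{11}=U_{22}=0$, i.e. $U\ket{1}\propto\ket{2}$ and $U\ket{2}\propto\ket{1}$.

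Writing $\ket{u_i}=U\ket{i}$, the projector $\mathbb{P}_{\Gamma,\Delta}$ is onto $\SPAN(\{\ket{u_i}\}_{i\in\Gamma^c})\cap\SPAN(\{\ket{j}\}_{j\in\Delta^c})$, and in $\C^2$ this intersection is immediate to resolve: it is the whole space when $\Gamma=\Delta=\emptyset$, it is $\{0\}$ whenever one of the index complements is empty, and otherwise it is the one-dimensional span of a single coordinate or column vector. Running through the nine admissible pairs, the pairs $(\emptyset,\emptyset)$, $(\{1,2\},\emptyset)$, $(\emptyset,\{1,2\})$ contribute $0$ (either $\mathbb{P}_{\Gamma,\Delta}=0$ or $\Pi_\Gamma+\Theta_\Delta$ vanishes on its range). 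The four singleton choices $(\{1\},\emptyset)$, $(\{2\},\emptyset)$, $(\emptyset,\{1\})$, $(\emptyset,\{2\})$ each produce a rank-one operator of norm $|\braket{1}{u_2}|^2=|\braket{2}{u_1}|^2=|U_{12}|^2$; for example $(\emptyset,\{2\})$ gives $\mathbb{P}_{\Gamma,\Delta}=\proj{1}$ and $\mathbb{P}_{\Gamma,\Delta}\,\Theta_{\{2\}}\,\mathbb{P}_{\Gamma,\Delta}=|U_{12}|^2\proj{1}$. Finally the two crossed choices $(\{1\},\{2\})$ and $(\{2\},\{1\})$ require $\ket{u_2}\propto\ket{1}$, respectively $\ket{u_1}\propto\ket{2}$, which holds only when $|U_{12}|^2=1$; in that case $(\{1\},\{2\})$ yields $\mathbb{P}_{\Gamma,\Delta}=\proj{1}$ and $\mathbb{P}_{\Gamma,\Delta}(\Pi_\Gamma+\Theta_\Delta)\mathbb{P}_{\Gamma,\Delta}=2\proj{1}$, of norm $2$.

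Collecting the values, the maximum in Theorem~\ref{thm:entFREEunamb} equals $|U_{12}|^2$ when $|U_{12}|^2<1$ and equals $2$ when $|U_{12}|^2=1$; applying the prefactor $\tfrac12$ gives the claimed piecewise formula. For the statement on the optimal input I would read off the top eigenvector of the operator attaining the maximum: in the generic case the pair $(\emptyset,\{2\})$ attains it with operator $|U_{12}|^2\proj{1}$, and in the degenerate case the pair $(\{1\},\{2\})$ attains it with operator $2\proj{1}$, so in both cases $\sigma=\proj{1}$ is optimal. The only point requiring a little care is that several pairs attain the maximal value $|U_{12}|^2$ but with different optimal inputs (e.g. $(\{1\},\emptyset)$ is optimized by $\ket{u_2}$, not $\ket{1}$); the content of the last sentence of the corollary is the observation that one can always select a maximizing pair, namely $(\emptyset,\{2\})$ or $(\{1\},\{2\})$, whose optimal input is exactly $\ket{1}$. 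This is bookkeeping rather than a genuine obstacle, and completes the proof.
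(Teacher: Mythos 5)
Your proposal is correct and follows exactly the route the paper intends: the corollary is stated as a direct specialization of Theorem~\ref{thm:entFREEunamb} to $d=2$, and your exhaustive evaluation of the nine disjoint pairs $(\Gamma,\Delta)$ — yielding $|U_{12}|^2$ for the four singleton/empty pairs, $2$ for the crossed pairs precisely when $|U_{12}|^2=1$, and $0$ otherwise — is the computation the paper leaves implicit. Your closing remark correctly identifies why $\proj{1}$ can be chosen as the optimal input in both regimes, including the unitarity facts $|U_{12}|=|U_{21}|$ and $|U_{12}|=1\Rightarrow U_{11}=0$ needed to verify the unambiguity constraints.
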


The following corollary states that in the qubit case the unambiguous 
discrimination with the assistance of entanglement always outperforms the 
unambiguous discrimination without the use of entanglement. 
On top of that, the special cases for which the use of entanglement does not 
give any advantage are described. 
\begin{corollary}
Let $\PP_\1$ and $\PP_U$ be two von Neumann measurements on a qubit. 
If $|U_{1,1}| \not\in  \{0, 1\} $, then 
the probability of entanglement-assisted unambiguous discrimination 
is given by
\begin{equation}
p_{\mathrm{u}} = 1- |U_{1,1}|
\end{equation} 
and it is always greater then the probability without assistance of 
entanglement
\begin{equation}
\p =
\begin{cases}
1 & \text{ if }  \  \ |U_{1,2}|^2 = 1\\
\frac12 |U_{1,2}|^2 & \text{ if }  \ \ |U_{1,2}|^2 < 1.
\end{cases}
\end{equation}
Moreover,  if $|U_{1,1}| \in  \{0, 1\}$, then $\pent = \p$.
\end{corollary}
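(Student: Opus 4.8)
The plan is to read off the value of $p_{\mathrm{u}}^e$ directly from Theorem~\ref{thm:ENTunamb} and then compare it with the entanglement-free value supplied by the preceding corollary. Specialising Theorem~\ref{thm:ENTunamb} to $d=2$ gives $p_{\mathrm{u}}^e = 1 - \min_{\rho} F(\rho)$, where $F(\rho) = |\bra{1}\rho U\ket{1}| + |\bra{2}\rho U\ket{2}|$ and the minimum runs over qubit states $\rho$. So the entire problem reduces to showing $\min_\rho F(\rho) = |U_{1,1}|$. The upper bound is immediate: taking $\rho = \proj{1}$ gives $\bra{1}\rho U\ket{1} = U_{1,1}$ and $\bra{2}\rho U\ket{2} = 0$, hence $F(\proj{1}) = |U_{1,1}|$.

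The main step is the matching lower bound $F(\rho)\ge |U_{1,1}|$ for every $\rho$, and the key observation is a phase-alignment identity special to $2\times 2$ unitaries. Writing $\rho=(\rho_{ij})$ with $\rho_{11}+\rho_{22}=1$, one has $\bra{1}\rho U\ket{1}=\rho_{11}U_{1,1}+\rho_{12}U_{2,1}$ and $\bra{2}\rho U\ket{2}=\rho_{21}U_{1,2}+\rho_{22}U_{2,2}$. From $U^{-1}=U^\dagger$ together with the adjugate formula for the inverse of a $2\times2$ matrix one gets $\det U\,\overline{U_{2,2}}=U_{1,1}$ and $\det U\,\overline{U_{1,2}}=-U_{2,1}$. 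Hence the single complex number $z:=\det U\,\overline{\bra{2}\rho U\ket{2}}$ satisfies $|z|=|\bra{2}\rho U\ket{2}|$ (since $|\det U|=1$) and, using $\overline{\rho_{21}}=\rho_{12}$ and $\overline{\rho_{22}}=\rho_{22}$,
\begin{equation}
\bra{1}\rho U\ket{1}+z=(\rho_{11}U_{1,1}+\rho_{12}U_{2,1})+(\rho_{22}U_{1,1}-\rho_{12}U_{2,1})=(\rho_{11}+\rho_{22})U_{1,1}=U_{1,1}.
\end{equation}
The triangle inequality then gives $F(\rho)=|\bra{1}\rho U\ket{1}|+|z|\ge|\bra{1}\rho U\ket{1}+z|=|U_{1,1}|$, as desired. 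I expect this identity --- that conjugating the second diagonal entry by $\det U$ makes it combine with the first to yield exactly $U_{1,1}$, so that the $\rho_{12}$-dependence cancels --- to be the crux; everything else is bookkeeping. Combining the two bounds yields $p_{\mathrm{u}}^e=1-|U_{1,1}|$ for every qubit $U$, the boundary values included.

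Finally I would compare with the entanglement-free value. The preceding corollary gives $p_u=1$ when $|U_{1,2}|^2=1$ and $p_u=\tfrac12|U_{1,2}|^2$ otherwise; since $|U_{1,2}|^2=1-|U_{1,1}|^2$ for a qubit unitary, these two regimes correspond exactly to $|U_{1,1}|=0$ and $|U_{1,1}|>0$. For $|U_{1,1}|\in(0,1)$ a one-line computation gives
\begin{equation}
p_{\mathrm{u}}^e-p_u=(1-|U_{1,1}|)-\tfrac12\bigl(1-|U_{1,1}|^2\bigr)=\tfrac12\bigl(1-|U_{1,1}|\bigr)^2>0,
\end{equation}
establishing the strict advantage of entanglement. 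For the boundary cases one simply evaluates both formulas: if $|U_{1,1}|=1$ then $U$ is diagonal, $p_{\mathrm{u}}^e=1-1=0$ and $p_u=\tfrac12|U_{1,2}|^2=0$; if $|U_{1,1}|=0$ then $|U_{1,2}|^2=1$, so $p_{\mathrm{u}}^e=1-0=1$ and $p_u=1$. Hence $p_{\mathrm{u}}^e=p_u$ exactly on $|U_{1,1}|\in\{0,1\}$, which completes the proof.
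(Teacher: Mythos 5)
Your proof is correct. The paper states this corollary without proof, the intended derivation being exactly the one you follow: specialise Theorem~\ref{thm:ENTunamb} to $d=2$ and compare with the entanglement-free corollary. The substantive work you supply --- which the paper leaves implicit --- is the evaluation of $\min_\rho\bigl(|\bra{1}\rho U\ket{1}|+|\bra{2}\rho U\ket{2}|\bigr)$. Your route to it is elementary and self-contained: the adjugate identities $\det U\,\overline{U_{2,2}}=U_{1,1}$ and $\det U\,\overline{U_{1,2}}=-U_{2,1}$ make $z=\det U\,\overline{\bra{2}\rho U\ket{2}}$ satisfy $\bra{1}\rho U\ket{1}+z=U_{1,1}$ with the $\rho_{12}$-dependence cancelling, so the triangle inequality gives the lower bound $|U_{1,1}|$ and $\rho=\proj{1}$ attains it. I verified the identity and the cancellation; they hold. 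The paper's machinery would instead reach the same value through the numerical-range/eigenvalue geometry behind Lemma~\ref{lem: state} and the quantity $\Upsilon(U)$ (for a qubit, $\cos(\Upsilon(U)/2)=|U_{1,1}|$, giving $p_{\mathrm{u}}^e=1-\cos(\Upsilon(U)/2)$ as in Fig.~\ref{fig:geometric1}); your argument avoids invoking any of that and, as a bonus, establishes $p_{\mathrm{u}}^e=1-|U_{1,1}|$ for all qubit unitaries including the boundary cases $|U_{1,1}|\in\{0,1\}$, which makes the final case comparison (using $|U_{1,2}|^2=1-|U_{1,1}|^2$ and $p_{\mathrm{u}}^e-p_u=\tfrac12(1-|U_{1,1}|)^2$) immediate.
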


\begin{rem*}
The above considerations can be extended to unambiguous discrimination of 
multiple copies of von Neumann measurements applied in parallel. To this end, 
if we have access to $N$ parallel queries to a black box measurement, it 
suffices to replace unitaries  $\1$ by  $\1^{\otimes N}$ and $U$  by 
$U^{\otimes N}$ in the above computations. Interestingly, in the contrast to 
unambiguous discrimination of quantum states \cite{DIEKS1988}, having access to 
two copies of black box measurement, sometimes allows attaining perfect 
discrimination. Specifically, consider the problem of discriminating between 
$\PP_\1$ and $\PP_H$, where $H=\frac{1}{\sqrt{2}}\begin{pmatrix} 
1 & 1 \\
1 & -1 
\end{pmatrix}$. Explicit computation shows that by taking the input state as 
$\ket{\Psi}=\frac{1}{\sqrt{2}}(\ket{1}\ket{1}-\ket{2}\ket{2})$ allows us to 
perfectly distinguish between $\PP^{\otimes 2}_\1$ and $\PP^{\otimes 2}_H$.  
\end{rem*}

\section{Conclusions}\label{sec:outro} We have presented a
comprehensive treatment of the problem of discrimination of von Neumann
measurements. First of all, we showed an alternative proof of the fact that for 
any pair of measurements $\PP_1$ and
$\PP_2$, $\PP_1 \neq \PP_2$, there exists a finite number $N$ of uses of the
black box which allows us to achieve perfect discrimination. Moreover, we 
calculated the exact value of the diamond norm for given $N$. This is formally
stated in Corollary~\ref{cor:finite-N}. We also proved that the parallel
discrimination scheme is optimal in the scenario of multiple-shot minimum error 
discrimination of von Neumann measurements (see 
Theorem~\ref{th:parallel-optimal}).

Moreover, we studied unambiguous discrimination of von Neumann measurements. Our
main contribution to this problem was the derivation of the general schemes that
attain the optimal success probability both with (see Theorem
\ref{thm:ENTunamb}) and without (see Theorem \ref{thm:entFREEunamb})  the
assistance of entanglement. Interestingly, for entanglement-assisted 
unambiguous discrimination the optimal success probability is functionally
related to the corresponding success probability for minimum error
discrimination. Finally, we show that the parallel scheme is also optimal for
unambiguous discrimination.

There are many interesting directions for further study that still
remain to be explored. Below we list the most important (in our opinion) open
research problems:
\begin{itemize}
\item Generalization of our results from projective measurements to other
classes of measurements such as projective-simulable measurements
\cite{Oszmaniec2017}, measurements with limited number of outcomes
\cite{Leo2017} or general quantum measurements (POVMs).

\item Systematical study of the problem of unambiguous discrimination of
projective measurements in the multiple-shot regime.

\item Can typical pairs of Haar-random projective measurements on $\C^d$  be
discriminated perfectly using only one query and the assistance of entanglement
as $d\rightarrow \infty$?

\item How much entanglement is needed to attain the optimal success
probability of multiple-shot discrimination of generic projective measurements
on $\C^d$? In the same scenario, is it necessary to adopt the final measurement
to the pair of measurements to be discriminated? 
\end{itemize}

\section*{Acknowledgements}
We would like to thank Alessandro Bisio and Giulio Chiribella for fruitful 
discussions.

This work was supported by the Polish National
Science Centre under project numbers 2016/22/E/ST6/00062 (ZP, AK, RK) and
2015/18/E/ST2/ 00327 ({\L}P).  MO acknowledges the support of Homing 
programme of the Foundation for Polish Science co-financed by the European 
Union under the European Regional Development Fund.

\bibliographystyle{ieeetr}
\bibliography{multiple_shot_measurement}

\onecolumn\newpage
\appendix

\section{Proof of Theorem \ref{th:parallel}}\label{app:parallel}
In this appendix we will begin with quoting two lemmas from 
\cite{puchala2018strategies} which contribute the main part of the proof of 
Theorem~\ref{th:parallel}. 
The first lemma states that a function $|\Tr(\rho UE)|$ has a saddle point.
The second lemma gives an equivalent condition to the existence of the saddle 
point and presents the form of the optimal state. 
Then, we present the proof of Theorem \ref{th:parallel} and a short discussion 
about the amount of entanglement needed for the discrimination.

\begin{lemma}[Lemma 4 from~\cite{puchala2018strategies}]\label{lem: state-1}
	Let us fix a unitary matrix $U \in \UU_d$. Then,
	\begin{equation}
	\max_{E \in \diaguni_d} \min_{\rho \in \DD(\C^d)}|\Tr(\rho UE)|=\min_{\rho 
	\in \DD(\C^d)}\max_{E \in \diaguni_d}|\Tr(\rho UE)|.
	\end{equation}
\end{lemma}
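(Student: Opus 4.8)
The plan is to recognize the claim as a saddle-point (minimax) equality and to reduce it to the classical von Neumann minimax theorem, which applies to bilinear functions on compact convex sets. The obstruction to a direct application is twofold: the modulus $|\Tr(\rho UE)|$ is convex — not concave — in $E$, and the set $\diaguni_d$ is not convex. Both difficulties are removed at once by the identity $|z| = \max_{|v|\le 1}\Re(\bar v z)$, which linearizes the modulus.

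First I would fix $\rho$ and compute the inner maximum explicitly. Writing $z_j = \bra{j}\rho U\ket{j}$ and $E = \diag(c_1,\dots,c_d)$, one has $\Tr(\rho UE) = \sum_j c_j z_j$, so aligning the phases gives $\max_{E\in\diaguni_d}|\Tr(\rho UE)| = \sum_j |z_j| = \sum_j|\bra j\rho U\ket j|$. Since this is also the maximum of the real-linear functional $E\mapsto\Re\Tr(\rho UE)$ over the convex hull $\diagmodul$ of $\diaguni_d$ (attained at an extreme point, i.e. a diagonal unitary), I may simultaneously enlarge $\diaguni_d$ to $\diagmodul$ and drop the modulus: $\max_{E\in\diaguni_d}|\Tr(\rho UE)| = \max_{E\in\diagmodul}\Re\Tr(\rho UE)$. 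Denoting this convex function of $\rho$ by $\phi(\rho)$, the right-hand side of the lemma is $\min_\rho\phi(\rho)$, attained by compactness.

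Next I would apply von Neumann's minimax theorem to the genuinely bilinear function $G(\rho,E)=\Re\Tr(\rho UE)$ on the compact convex sets $\DD(\C^d)$ and $\diagmodul$. This yields $\min_\rho\max_{E\in\diagmodul}G = \max_{E\in\diagmodul}\min_\rho G =: V$. By the previous paragraph the left-hand side equals $\min_\rho\phi(\rho)$, so $V$ is exactly the right-hand side of the lemma. It then remains to identify $V$ with the left-hand side $\max_{E\in\diaguni_d}\min_\rho|\Tr(\rho UE)|$. Weak duality (maximin $\le$ minimax) already gives ``$\le$'', so the whole content is the reverse inequality: producing a genuine diagonal \emph{unitary} $E_1$ with $\min_\rho|\Tr(\rho UE_1)|\ge V$. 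Taking a minimizer $\rho^\star$ of $\phi$, the first-order optimality condition for minimizing $\phi$ over $\DD(\C^d)$ selects phases $v_j$ — forced to $v_j=\overline{z_j^\star}/|z_j^\star|$ on indices with $z_j^\star\neq0$ — for which $\mathrm{Herm}(UE_1)\succeq V\cdot\1$ with $E_1=\diag(v_j)$; this says $\Re w\ge V$, hence $|w|\ge V$, for every $w$ in the numerical range $W(UE_1)$, i.e. $\nu(UE_1)=\min_\rho|\Tr(\rho UE_1)|\ge V$.

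The main obstacle is precisely this last step: on indices $j$ with $\bra j\rho^\star U\ket j = 0$ the optimality condition constrains $v_j$ only to the closed unit disk, so the operator $E_1$ supplied by convex duality a priori lies in $\diagmodul$ rather than in $\diaguni_d$, and pushing those entries onto the unit circle is a rank-two perturbation of $\mathrm{Herm}(UE_1)$ that could in principle destroy the bound $\succeq V\1$. I expect to handle this by first treating the generic case — where the minimizer $\rho^\star$ satisfies $\bra j\rho^\star U\ket j\neq0$ for every $j$, so that all phases are automatically unimodular and $E_1\in\diaguni_d$ outright — and then extending to all $U$ by continuity, both sides of the asserted equality being continuous functions of $U$ on $\UU_d$; the degenerate sub-case $V=0$ (perfect single-shot distinguishability) is immediate, since then both sides vanish.
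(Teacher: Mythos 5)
Your strategy is sound up to the last step, and --- for context --- the paper you are working from does not actually prove this statement: it is imported verbatim as Lemma~4 of \cite{puchala2018strategies}, so your argument has to stand entirely on its own. The parts that do stand are the linearization $\max_{E\in\diaguni_d}|\Tr(\rho UE)|=\sum_j|\bra{j}\rho U\ket{j}|$, the passage to the convex hull $\diagmodul$, the minimax theorem applied to the bilinear form $\Re\Tr(\rho UE)$, weak duality, and the generic-case extraction of a certificate $E_1=\diag\bigl(\overline{z_j^\star}/|z_j^\star|\bigr)$ with $\mathrm{Herm}(UE_1)\succeq V\1$, hence $\nu(UE_1)\ge V$; all of this is correct.

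The gap is exactly the step you flag, and it is not a removable technicality. First, the degenerate case is nonempty: take $U=R_t\oplus(1)\in\UU_3$, where $R_t$ is the real rotation by an angle $t\in(0,\pi/2)$ acting on $\mathrm{span}\{\ket{1},\ket{2}\}$. Then, in your notation, $\phi(\rho)=\sum_j|z_j|\ge\Re\Tr(\rho U)\ge\lambda_{\min}(\mathrm{Herm}(U))=\cos t$, and the value $\cos t$ is attained (e.g.\ by $\tfrac12\1_2\oplus 0$), so $V=\cos t>0$; but any minimizer $\rho^\star$ must satisfy $\Re\Tr(\rho^\star U)=\cos t$, hence be supported on the $\cos t$-eigenspace $\mathrm{span}\{\ket{1},\ket{2}\}$ of $\mathrm{Herm}(U)$, which forces $z_3^\star=\rho^\star_{33}=0$. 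So there are unitaries with $V>0$ to which your generic argument does not apply at all. Second, your proposed repair --- prove the identity on a dense set of $U$ and conclude by continuity --- is asserted, not proved: continuity of both sides is routine (both are $1$-Lipschitz in $U$), but the density of ``generic'' $U$ is a substantive claim, and it is not obviously true, since minimizers of sums of moduli are $\ell_1$-type optimizers whose zero patterns are typically \emph{stable} under small perturbations of the data. Third, note that the unitary you need only has to satisfy $\nu(UE)\ge V$, i.e.\ $W(UE)$ must avoid an open disk, which is strictly weaker than the half-plane condition $\mathrm{Herm}(UE)\succeq V\1$ that first-order optimality produces; a proof covering all $U$ must exploit this extra freedom (in the example above $E=\1$ happens to work, but nothing in your argument finds it). This degenerate-case analysis --- handled in \cite{puchala2018strategies} through the structure of the extremal eigenvalues and eigenvectors of $UE_0$, cf.\ Lemma~\ref{lem: state} quoted in the appendix --- is where the real content of the lemma lives, and it is missing from your proposal.
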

\begin{lemma}[Lemma 5 from~\cite{puchala2018strategies}]\label{lem: state}
	Let us fix a unitary matrix $U \in \UU_d$ and 
	\begin{itemize}
		\item $E_0 \in \diaguni_d$ and $D(E) = \min_{\rho\in 
			\DD(\C^d)} 
		|\Tr \rho UE|$,
		
		\item $D(E_0)>0$,
		
		\item $\lambda_1,\lambda_d$ denote the most distant pair of eigenvalues 
		of $UE_0$
		\item $P_1$, $P_d$ denote the projectors on the subspaces spanned by 
		the 
		eigenvectors corresponding to $\lambda_1$, $\lambda_d$.
	\end{itemize}
	Then, the function $ \DD(\C^d) \times  \diaguni_d\ni (\rho,E)\mapsto 
	|\Tr(\rho UE)|$ has a saddle point in $(\rho_0, E_0)$ i.e.
	\begin{equation}\label{app:eq_1}
	\max_{E \in \diaguni_d} \min_{\rho \in \DD(\C^d)}|\Tr(\rho UE)|=|\Tr(\rho_0 
	UE_0)|=\min_{\rho \in \DD(\C^d)}\max_{E \in \diaguni_d}|\Tr(\rho UE)|
	\end{equation} 
	if and only if there exist states $\rho_1, \rho_d$ such that
	\begin{itemize}
		\item $\rho_1=P_1 \rho_1 P_1$,
		\item $\rho_d=P_d \rho_d P_d$,
		\item $\diag (\rho_1)=\diag (\rho_d)$.
	\end{itemize}
Moreover, if the above holds, then the state $\rho_0$ satisfying 
Eq.~\eqref{app:eq_1} can be chosen as $\frac12 \rho_1 + \frac12 \rho_d$.
\end{lemma}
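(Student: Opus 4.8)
The plan is to work entirely in terms of the numerical range. Since $\{\Tr(\rho UE):\rho\in\DD(\C^d)\}$ is exactly the numerical range $W(UE)$, and $UE$ is unitary (hence normal), $W(UE)$ equals the convex hull of the eigenvalues of $UE$. Thus $D(E)=\min_\rho|\Tr(\rho UE)|$ is the distance from the origin to this convex hull, and the hypothesis $D(E_0)>0$ means $0\notin W(UE_0)$, so all eigenvalues of $UE_0$ lie in an open half-plane and the unique nearest point of $W(UE_0)$ to the origin is the midpoint $\tfrac12(\lambda_1+\lambda_d)$ of the chord joining the two extreme eigenvalues. I would then unfold the saddle-point condition into its two defining inequalities: $\rho_0$ attains $\min_\rho|\Tr(\rho UE_0)|$ and $E_0$ attains $\max_E|\Tr(\rho_0 UE)|$. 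The one computational fact I would isolate first is that $\rho_j=P_j\rho_j P_j$ forces $\rho_j UE_0=\lambda_j\rho_j$, because the spectral projector $P_j$ commutes with the normal operator $UE_0$ and $P_jUE_0=\lambda_jP_j$; consequently $\diag(\rho_j UE_0)=\lambda_j\diag(\rho_j)$.

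For the reverse implication I would set $\rho_0=\tfrac12(\rho_1+\rho_d)$. The fact above gives $\Tr(\rho_0 UE_0)=\tfrac12(\lambda_1+\lambda_d)$, which is precisely the nearest point, so $|\Tr(\rho_0 UE_0)|=D(E_0)$ and the minimizing inequality holds. For the maximizing inequality I would use that, for diagonal $E$, $\Tr(\rho_0 UE)=\sum_k(E)_{kk}(\rho_0 U)_{kk}$, whence $\max_E|\Tr(\rho_0 UE)|=\sum_k|(\rho_0 U)_{kk}|$, with $E_0$ optimal iff all entries $(\rho_0 UE_0)_{kk}$ carry a common phase. But $\diag(\rho_0 UE_0)=\tfrac12\big(\lambda_1\diag(\rho_1)+\lambda_d\diag(\rho_d)\big)=\tfrac12(\lambda_1+\lambda_d)\diag(\rho_1)$ using $\diag(\rho_1)=\diag(\rho_d)$; since $\diag(\rho_1)$ is entrywise nonnegative, every diagonal entry of $\rho_0 UE_0$ is a nonnegative multiple of $\lambda_1+\lambda_d$, hence shares its phase. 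So $E_0$ is optimal and $(\rho_0,E_0)$ is a saddle point, which also establishes the ``moreover'' clause.

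For the forward implication I assume $(\rho_0,E_0)$ is a saddle point. The minimizing inequality forces $\Tr(\rho_0 UE_0)$ to equal the unique nearest point $\tfrac12(\lambda_1+\lambda_d)$; writing $\Tr(\rho_0 UE_0)=\sum_j\lambda_j\Tr(\rho_0\Pi_j)$ as a convex combination of eigenvalues, a point lying at the midpoint of the extreme chord can be reached only with aggregate weight $\tfrac12$ on the $\lambda_1$-eigenspace and $\tfrac12$ on the $\lambda_d$-eigenspace and zero elsewhere, because every other eigenvalue lies strictly off that chord. This lets me define the states $\rho_1=2P_1\rho_0 P_1$ and $\rho_d=2P_d\rho_0 P_d$ (each of unit trace since $\Tr(\rho_0P_1)=\Tr(\rho_0P_d)=\tfrac12$), which satisfy the two support conditions automatically. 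The remaining equality $\diag(\rho_1)=\diag(\rho_d)$ I would extract from the maximizing inequality: optimality of $E_0$ forces all $(\rho_0 UE_0)_{kk}$ to share the phase of their sum, i.e. the bisector direction of $\lambda_1$ and $\lambda_d$, and a per-entry real-linear computation then pins $\diag(\rho_1)$ to $\diag(\rho_d)$ after invoking $\Tr\rho_1=\Tr\rho_d=1$.

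The main obstacle I anticipate is precisely this last per-entry argument, for two reasons. First, the optimal $\rho_0$ need not be block-diagonal across the two eigenspaces, so $\diag(\rho_0 UE_0)_k$ picks up cross-block contributions of the form $\lambda_d c_k+\lambda_1\bar c_k$ with $c_k=(P_1\rho_0 P_d)_{kk}$; the key observation that rescues the argument is that, after factoring out the bisector phase, these cross terms contribute $2\,\Re(e^{i\delta}c_k)$, a purely real quantity, so they drop out of the imaginary (phase) part and the condition still reduces to $(\diag\rho_1)_k=(\diag\rho_d)_k$ once $\sin\delta\neq0$. Second, I would need to separately dispose of the degenerate case $\lambda_1=\lambda_d$, where $UE_0$ is a scalar, $W(UE_0)$ is a single point, and one simply takes $\rho_1=\rho_d=\rho_0$.
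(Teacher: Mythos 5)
Your proposal cannot be compared against an in-paper argument, because the paper does not prove this lemma at all: it is imported verbatim (as Lemma~5) from \cite{puchala2018strategies}, and the appendix only restates it for use in Theorem~\ref{th:parallel}. Judged on its own merits, your proof is correct. The reverse implication is exactly right: $\rho_j=P_j\rho_j P_j$ gives $\rho_jUE_0=\lambda_j\rho_j$, so $\Tr(\rho_0UE_0)=\tfrac12(\lambda_1+\lambda_d)$ is the unique nearest point of $W(UE_0)=\mathrm{conv}\,\mathrm{spec}(UE_0)$ to the origin, while $\diag(\rho_0UE_0)=\tfrac12(\lambda_1+\lambda_d)\diag(\rho_1)$ has entries of a common phase, which is precisely the triangle-equality condition for $E_0$ to attain $\max_E|\Tr(\rho_0UE)|=\sum_k|(\rho_0U)_{kk}|$; the two one-sided optimality statements then chain into the saddle inequality, which also yields the ``moreover'' clause. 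In the forward direction you identified and resolved the only genuinely delicate point: $\rho_0$ need not be block diagonal with respect to $P_1,P_d$, but the cross terms $\lambda_1\bar c_k+\lambda_d c_k$ equal $2\,\Re(e^{\ii\delta}c_k)$ times the bisector phase (with $\delta$ the half-angle between $\lambda_1$ and $\lambda_d$), hence are real after factoring that phase, so vanishing of the imaginary part forces $(\rho_1)_{kk}=(\rho_d)_{kk}$ because $\sin\delta\neq0$; the degenerate case $\lambda_1=\lambda_d$ is correctly split off. Two points you should make explicit in a final write-up: (i) zero aggregate weight on the remaining eigenspaces, $\Tr(\rho_0\Pi_j)=0$, upgrades to $\rho_0\Pi_j=\Pi_j\rho_0=0$ because both operators are positive semidefinite --- this is what entitles you to write $\rho_0=(P_1+P_d)\rho_0(P_1+P_d)$ and to have only the $c_k$ cross terms in $\diag(\rho_0UE_0)$; (ii) the displayed equation in the lemma is literally weaker than the saddle-point property you unfold it into (it does not by itself force $\rho_0$ to minimize $|\Tr(\rho\, UE_0)|$ or $E_0$ to maximize $|\Tr(\rho_0UE)|$), but your reading is the intended one --- indeed it is the only reading under which the ``only if'' direction can hold, and it is the form in which the lemma is actually invoked in the proof of Theorem~\ref{th:parallel}.
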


\begin{proof}[Proof of Theorem \ref{th:parallel}]
In the first step of the proof assume that $\PP_U$ and $\PP_\1$ are perfectly
distinguishable  in a single-shot scenario i.e.
\begin{equation}
\left\Vert  \PP_U - \PP_\1 \right\Vert_\diamond = \min_{E \in 
	\mathcal{DU}_d}
\left\Vert \Phi_{UE} - \Phi_\1 \right\Vert_\diamond = 2.
\end{equation}
This trivially implies that for each $N \in \mathbb{N}$ the measurements
$\PP_{U^{\otimes N}}$ and $\PP_{\1}$ are perfectly distinguishable and it holds
that
\begin{equation}
\begin{split}
2&=\left\Vert  \PP_{U^{\otimes N}} - \PP_{\1} \right\Vert_\diamond = 
\min_{F \in \mathcal{DU}_{d^N}} \left\Vert  \Phi_{(U^{\otimes N})F} - \Phi_{\1} 
\right\Vert_\diamond 
\\
&\leq \min_{E \in \mathcal{DU}_d} \left\Vert  
\Phi_{U^{\otimes N} E^{\otimes N}} - \Phi_{\1} 
\right\Vert_\diamond \leq 2,
\end{split}
\end{equation}
which proves the thesis of the theorem in this case.

Now, consider the second case when $\PP_U$ and $\PP_\1$ are not perfectly
distinguishable using a single query. Then, according to equality
\begin{equation}
\left\Vert \PP_U - \PP_\1 \right\Vert_\diamond = \min_{E \in \mathcal{DU}_d} 
\left\Vert \Phi_{UE} - \Phi_\1 \right\Vert_\diamond 
\end{equation}
there exists an optimal matrix $E_0 \in \mathcal{DU}_d$
such that $0 \not\in W(UE_0)$ and $\left\Vert  \PP_U - \PP_\1 
\right\Vert_\diamond = \left\Vert
\Phi_{UE_0} - \Phi_\1 \right\Vert_\diamond$.

The general proof strategy is to utilize the Lemma \ref{lem: state} in order to 
construct optimal discriminator
$\rho_0$ for measurements $\PP_{U^{\otimes N}}$ and $\PP_{\1}$. Hence we start 
by establishing that the assumptions of this lemma are fulfilled. This will 
follow from Lemma~\ref{lem: state-1}. According to it the
function $(\rho, E) \mapsto |\Tr(\rho UE)|$ has a saddle point. Let us remind 
that $\left\Vert \Phi_U - \Phi_\1 \right\Vert_\diamond = 2 \sqrt{1 - \nu^2}$, 
where $\nu  =  \min_{x \in W(U)} \vert x \vert $. Due to equality
$\min_{E \in \mathcal{DU}_d} \left\Vert \Phi_{UE} - \Phi_{\1}
\right\Vert_\diamond = \left\Vert \Phi_{UE_0} - \Phi_\1 \right\Vert_\diamond$ 
we have 
\begin{equation}
\max_{E \in \DU_d} \min_{x \in W(UE)} |x| = \min_{x \in W(UE_0)} |x|.
\end{equation}
Hence using the property that the numerical range is a convex set 
\cite{hausdorff1919wertvorrat,toeplitz1918algebraische} we obtain
\begin{equation}
\max_{E \in \DU_d} \min_{\rho \in \DD(\C^d)}|\Tr(\rho 
UE)|=\min_{\rho \in \DD(\C^d)}|\Tr(\rho 
UE_0)|.
\end{equation} 
Therefore the former assumptions of Lemma \ref{lem: state} are satisfied for 
the matrix
$E_0$ and hence it is possible to take states $\rho_1$, $\rho_d$ which fulfill
the latter conditions of this Lemma.

To complete the proof we will separately study two cases. In the first one, we
assume that POVMs $\PP_{U}$ and $\PP_{\1}$ are not perfectly distinguishable 
using $N$
queries i.e.  $0 \not\in W 
\left(U^{\otimes N}
E_0^{\otimes N} \right)$. Hence, as $\diag (\rho_1) = \diag (\rho_d)$, then  
\begin{equation}
\diag (\rho_1^{\otimes N}) 
= \diag (\rho_d^{\otimes N}) 
\end{equation}
and $\rho_1^{\otimes N},\rho_d^{\otimes N}$ lie on the subspaces spanned by the
eigenvectors of the matrix $U^{\otimes N} E_0^{\otimes N}$ eigenvalues
$\lambda_1^N$ and $\lambda_d^N$, respectively. Consequently, we fulfilled
the latter assumptions of Lemma \ref{lem: state} and the reverse implication of 
this Lemma
states that the unitary matrix $E_0^{\otimes N}$ is optimal and for
$\rho_0=\frac12 \rho_1^{\otimes N} + \frac12 \rho_d^{\otimes N}$ it holds that
\begin{equation}
\begin{split}
\min_{\rho \in \DD(\C^d)} \left\vert \tr \left( \rho (UE_0)^{\otimes N} 
\right) \right\vert
&=  \left\vert \tr \left( \rho_0 (UE_0)^{\otimes N} 
\right) \right\vert \\ &= \max_{F \in \mathcal{DU}_{d^N}} 
\min_{\rho \in \DD(\C^{d^N})} \left\vert \tr \left( \rho U^{\otimes N} F 
\right) \right\vert.
\end{split}
\end{equation}
Hence 
\begin{equation}
\left\Vert  \Phi_{U^{\otimes N}
E_0^{\otimes N}} - 
\Phi_{\1} \right\Vert_\diamond  
= \min_{F \in \mathcal{DU}_{d^N}} \left\Vert  
\Phi_{U^{\otimes N}F} 
- \Phi_{\1} \right\Vert_\diamond \leq \min_{E \in \mathcal{DU}_d} \left\Vert  
\Phi_{U^{\otimes N}
	E^{\otimes N}} - \Phi_{\1} 
\right\Vert_\diamond 
\end{equation}
and eventually
\begin{equation}
 \left\Vert  \PP_{U^{\otimes N}} - \PP_{\1} \right\Vert_\diamond=\min_{E \in 
 \mathcal{DU}_d} \left\Vert  
\Phi_{U^{\otimes N}
E^{\otimes N}} - \Phi_{\1} 
\right\Vert_\diamond. 
\end{equation}

In the second case, let  
$0 \in W\left( U^{\otimes M} E_0^{\otimes M} \right)$. Let us consider the
situation when $M$ is the first index for which this happens,  which means that
\linebreak $0 \not\in W\left( U^{\otimes M-1} E_0^{\otimes M-1} \right)$.  

If the measurements in question can be perfectly distinguished with $M$ queries,
then we have $0 \in \textrm{conv} (\lambda_1^M, \lambda_1 \lambda_d^{M-1},
\lambda_d^M)$ and there exists a probability vector $p = (p_1, p_2, p_3)$ such
that
\begin{equation}
p_1 \lambda_1^M + p_2 \lambda_1 \lambda_d^{M-1} + p_3 \lambda_d^M = 0.
\end{equation}
Define a state
\begin{equation}\label{pf-eq-1}
\rho = p_1 \rho_1^{\otimes M}
+ p_2 \left(\rho_1 \otimes \rho_d^{\otimes M-1} \right) 
+ p_3 \rho_d^{\otimes M}.
\end{equation}
We will show that $\diag \left(\rho U^{\otimes M}\right) = 0$. Indeed
\begin{equation}\label{pf-eq-2}
\begin{split}
&\diag \left(\rho U^{\otimes M} 
E_0^{\otimes M} \right) \\ 
=& \diag \left(p_1 \lambda_1^M \rho_1^{\otimes M}  +
p_2  \lambda_1 \lambda_d^{M-1} \left(\rho_1 \otimes  \rho_d^{\otimes M-1}
\right)  
+ p_3 \lambda_d^M \rho_d^{\otimes M} \right) \\ 
=& p_1 \lambda_1^M \diag (\rho_1^{\otimes M})
+ p_2 \lambda_1 \lambda_d^{M-1} \diag \left(\rho_1 \otimes \rho_d^{\otimes 
M-1} \right)
+ p_3 \lambda_d^M \diag (\rho_d^{\otimes M}) \\
=& \left( p_1 \lambda_1^M + p_2  \lambda_1 \lambda_d^{M-1} + p_3 \lambda_d^M 
\right) \diag (\rho_1^{\otimes M}) =0.
\end{split}
\end{equation}
Thus from Proposition 3 form \cite{puchala2018strategies} the condition $\diag 
\left(\rho U^{\otimes M} 
E_0^{\otimes M} \right)=0$ 
implies that $\left\Vert  \PP_{U^{\otimes M}} - \PP_{\1} \right\Vert_\diamond = 
2$, and hence
\begin{equation}
\left\Vert  \PP_{U^{\otimes M}} - \PP_{\1} \right\Vert_\diamond = \min_{E \in 
\mathcal{DU}_d} \left\Vert  
\Phi_{U^{\otimes M} E^{\otimes M}} - 
\Phi_{\1} \right\Vert_\diamond. 
\end{equation}
For $N>M$, the equality $\left\Vert  \PP_{U^{\otimes M}} - \PP_{\1} 
\right\Vert_\diamond=2$ implies that $\left\Vert  \PP_{U^{\otimes N}} - 
\PP_{\1} \right\Vert_\diamond=2$. Therefore
\begin{equation}
\left\Vert  \PP_{U^{\otimes N}} - \PP_{\1} \right\Vert_\diamond=\min_{E \in 
\mathcal{DU}_d} \left\Vert  
\Phi_{U^{\otimes N} E^{\otimes N}} - 
\Phi_{\1} \right\Vert_\diamond, 
\end{equation}
which completes the proof.
\end{proof}

Let us discuss the amount of the discriminator's entanglement with environment. 
Note
that the minimal dimension of an auxiliary system needed for optimal
discrimination is equal to the rank of the state $\rho_0$. One special case
involves the situation $\Vert \PP_{U^{\otimes N}} - \PP_\1 \Vert_\diamond<2$,
where $\rho_0=\frac12 \rho_1^{\otimes N} + \frac12 \rho_d^{\otimes N}$. In the
best case, when eigenvalues $\lambda_1$ and $\lambda_d$ are not degenerated, the
states $\rho_1, \rho_d$ are pure, hence $\text{rank} (\rho_0)=2$. On the other
hand, when the spectrum of $UE_0$ contains only $\lambda_1$ and $\lambda_d$,
then the rank of $\rho_0$ can be roughly upper-bounded by $(d-1)^N+1$.

The situation $\Vert \PP_{U^{\otimes N}} -\PP_\1 \Vert_\diamond=2$ is more 
complicated to analyze, due to lack of an analytic form of the discriminator in 
a general case. However, finding a pair of not degenerated eigenvalues 
$\lambda_1,\lambda_d$ 
saturating the latter assumptions of Lemma~\ref{lem: state} will lead to 
$\text{rank} (\rho_0)=3$, where $\rho_0$ is defined as in Eq.~\eqref{pf-eq-1} 
and we check its optimality in the same manner as in Eq.~\eqref{pf-eq-2}.

\section{Explicit form of the matrix $A_U$}\label{app:au}
 The matrix $A_U$ is a general operation which allows for adaptive information 
processing in the sequential discrimination scenario of von Neumann 
measurements. It consists of a sequence of unitary matrices $U$ 
acting successively on given registers interlacing with classically controlled 
unitary operations $V^i$. The explicit form of the matrix $A_U$ is given by
\begin{equation}
\begin{split}
A_U = 
&\left(  \1_{1, \ldots N-1} \otimes U \otimes \1_{N+1}  \right)\\
&\left(
\sum_{i_1,\ldots,i_{N-1}} \proj{i_1,
	\ldots,i_{N-1}} \otimes 
V^{(N-1)}_{i_1,
	\ldots,i_{N-1}}  \right)  \\
&
\left( \1_{1, \ldots , N-2} \otimes U \otimes \1_{N, N+1}   \right)\\
&
\left(
\sum_{i_1,\ldots,i_{N-2}} \proj{i_1,
	\ldots,i_{N-2}} \otimes 
V^{(N-2)}_{i_1,
	\ldots,i_{N-2}}  \right) \\
&\ldots
\\
&\left(  \1_1 \otimes U \otimes \1_{3, \ldots N+1}  \right)\\
&\left( \sum_{i_1} \proj{i_1} \otimes  V^{(1)}_{i_1} \right)\\
&\left( U \otimes \1_{2, \ldots N+1} \right).
\end{split}
\end{equation}

\section{Proof of Theorem \ref{thm:ENTunamb}}\label{app:ENTunamb}

\begin{proof}[Proof of Theorem \ref{thm:ENTunamb}]
 Assume that  a black-box measurement ($\PP_\1$ or $\PP_U$) acts on a system 
 extended by the ancilla space $\HH_B$  (of some dimension $d_1$). Without loss 
 of generality we take the pure input state i.e. 
 $\sigma=\ketbra{\psi_{AB}}{\psi_{AB}}$. Let $X$ be a matrix such
that $\ket{\psi_{AB}}=\sum_{i,j=1}^{d,d_1} X_{ij}\ket{i}\ket{j}$. The action of
the channels $\PP_\1 \otimes \1_B$ and $\PP_U \otimes \1_B$ on 
$\ketbra{\psi_{AB}}{\psi_{AB}}$ can be expressed as
\begin{equation}\label{eq:postSTSATE}
\begin{split}
\left(\PP_\1 \otimes \1_B  \right) 
(\ketbra{\psi_{AB}}{\psi_{AB}})&=\sum_{i=1}^d 
\ketbra{i}{i}\otimes X^T \ketbra{i}{i} \overline{X},\\
\left(\PP_U \otimes \1_B  \right)(\ketbra{\psi_{AB}}{\psi_{AB}})&=\sum_{i=1}^d 
\ketbra{i}{i}\otimes X^T \overline{U} \ketbra{i}{i}U^T \overline{X},
\end{split}
\end{equation}
where we treat measurements $\PP_\1$ and $\PP_U$ as a measure and prepare 
channels of the form $\Psi_\MM (\sigma) =\sum_{i=1}^n \tr(M_i \sigma) 
\ketbra{i}{i}$.

As for any Hermitian operator $M$ and any measurement $\PP$ we have
\begin{equation}\label{eq:dephasing}
\begin{split}
\tr \left( M (\PP \otimes \1)(\sigma)  \right)
&=\tr \left( M (\Delta\PP \otimes \1)(\sigma) \right)  \\
&=\tr \left( (\Delta \otimes \1)(M) (\PP \otimes \1)(\sigma) \right),
\end{split}
\end{equation}
where $\Delta$ is a dephasing channel. Hence we can restrict our attention to 
considering measurements $\MM$ which effects have block-diagonal structure, 
that is 
\begin{equation}
\MM = \sum_{i=1}^d \ketbra{i}{i}\otimes \TT_i, 
\end{equation}
where $\TT_i$ is a POVM on $\HH_B$ associated with a measure and prepare 
channel. From 
Eq.\eqref{eq:postSTSATE} we see that upon obtaining the label 
$i$, the state of the auxiliary subsystem is either
\begin{equation}
\ketbra{x_i}{x_i} = p_i^{-1} X^\top \ketbra{i}{i} \overline{X},
\end{equation}
when measurement $\PP_\1$ was performed, or it is given by
\begin{equation}
\ketbra{y_i}{y_i} = q_i^{-1} X^\top \overline{U}\ketbra{i}{i}U^T \overline{X}
\end{equation} 
if $\PP_U$ was implemented. In the above formulas $p_i,q_i$ are responsible for 
normalization. We assume that $p_i>0$ and $q_i>0$ (otherwise the specific 
outcome $i$ does not occur). We see that states  $\ketbra{x_i}{x_i}$, 
$\ketbra{y_i}{y_i}$ are pure and therefore the optimal measurements 
$\TT_i=\lbrace T^{(i)}_1, T^{(i)}_2, T^{(i)}_? \rbrace$ 
will be simply given by
\begin{equation}
\begin{split}
T^{(i)}_1&=\gamma_1(\1-\proj{y_i}),\\
T^{(i)}_2&=\gamma_2(\1-\proj{x_i}),\\
T^{(i)}_?&=\1-T_1-T_2,
\end{split}
\end{equation}
for some choice of $\gamma_{1,2}$ which guarantees the non-negativity of
$T^{(i)}_?$. 

The probability of success in unambiguous discrimination of pure states
$\ket{x}, \ket{y}$ with unequal a priori probabilities $\eta,1-\eta$ is given 
by~\cite{jaeger1995optimal}
\begin{equation}
p^u_{succ}(x,y,\eta) =
\begin{cases}
    1 -\eta - (1-\eta) c^2 & \text{for } \eta < \frac{c^2}{1+c^2} \\
    1 - 2 c \sqrt{\eta(1-\eta) } & \text{for } 
    \frac{c^2}{1+c^2} \leq \eta \leq \frac{1}{1+c^2}\\
    1 -(1-\eta) - \eta c^2 & \text{for } \frac{1}{1+c^2} < \eta,
  \end{cases}
\end{equation}
where $c = |\braket{x}{y}|$.

We will use the following upper bound
\begin{equation}
p^u_{succ}(x,y,\eta) \leq 1 - 2 c \sqrt{\eta(1-\eta)},
\end{equation}
which can be verified directly by elementary calculations.

Let $\rho=X X^\dagger$. The overlap $c_i$ between states of the auxiliary 
subsystem is given by
\begin{equation}
c_i = |\braket{x_i}{y_i}| =  |\bra{i} \overline{X} X^\top 
\overline{U}\ket{i}|/\sqrt{p_i 
q_i} = |\bra{i} \rho U\ket{i}| / \sqrt{p_i q_i},
\end{equation}
while a priori probabilities of $\ket{x_i}, \ket{y_i}$ upon obtaining label $i$ 
are $\eta_i = \frac{p_i}{p_i+q_i}, 1- \eta_i = \frac{q_i}{p_i+q_i}$.
Taking the above into account, we get that probability of success in 
unambiguous measurement on auxiliary subsystem, given that label $i$ was 
observed, can be bounded from above by
\begin{equation}\label{key}
p^u_{succ}(x_i,y_i,\eta_i) \leq 1 - 2 c_i \frac{\sqrt{p_i q_i}}{p_i+q_i}
= 1 - \frac{2 |\bra{i} \rho U\ket{i}|}{p_i+q_i}.
\end{equation}
Therefore, the overall probability of success is bounded by
\begin{equation}
\begin{split}
\pent(\PP_{\1},\PP_{U}) &= 
\max_{\ket{\psi_{AB}}} \sum_i \mathrm{Pr}(\text{label} = i) 
p^u_{succ}(x_i,y_i,\eta_i)  \\
&\leq
\max_{\rho} \sum_i \frac12 (p_i + q_i)\left(1 - \frac{2 |\bra{i} 
\rho U\ket{i}|}{p_i+q_i}\right)\\
&=1-\min_{\rho}\sum_i |\bra{i} \rho U\ket{i}|.
\end{split}
\end{equation}

What is more, the above bound is tight. The situation when $\PP_\1$ and $\PP_U$ 
are perfectly distinguishable is trivial to check. In the case when $\PP_\1$ 
and $\PP_U$ are not perfectly distinguishable, then there exists a  
state $\rho$ which will give equal probabilities $p_i$ and $q_i$ for each label 
$i$. This statement follows from Lemma~\ref{lem: state}, from which we take 
$\rho = \rho_0$.
\end{proof}

\section{Proof of 
Theorem~\ref{th:parallel-optimal-unamb}}\label{app:parallel-unam}

\begin{proof}[Proof of Theorem \ref{th:parallel-optimal-unamb}]

We will assume that the unitary matrix $U$ is
optimal, \ie  $\Upsilon(U)=\Theta(U)$. Take an arbitrary input state $\sigma = 
\proj{\psi_{A,B}}$. Let us denote
\begin{equation}
\begin{split}
&\ket{x_i} = p_i^{-1/2} \left( \bra{i} \otimes \1_{N+1} \right) A_{\1} 
\ket{\psi_{A,B}} \\
&\ket{y_i} = q_i^{-1/2} \left( \bra{i} \otimes \1_{N+1} \right) A_{U} 
\ket{\psi_{A,B}},
\end{split}
\end{equation}
where $A_{U}$ and $A_{\1}$ are defined as in Appendix~\ref{app:au} and $p_i, 
q_i$ are responsible for normalization. Repeating the calculation from the 
single-shot scenario from the proof in Appendix \ref{app:ENTunamb} we can 
upper-bound the probability of successful discrimination as follows
\begin{equation}
\begin{split}
\pent(\Psi_U,\Psi_\1) & \leq  1- \min_{\ket{\psi_{A,B}}} \sum_i 
\left| \bra{\psi_{A,B}} A_\1^\dagger \left(\proj{i} \otimes \1_{N+1}\right) A_U 
\ket{\psi_{A,B}} \right| \\
& \leq 1-\min_{\ket{\psi_{A,B}}} \left| \sum_i \bra{\psi_{A,B}} A_\1^\dagger 
\left( \proj{i} 
\otimes \1_{N+1} \right) A_U 
\ket{\psi_{A,B}} \right| \\
& = 
1- \min_{\ket{\psi_{A,B}}} \left| \bra{\psi_{A,B}} A_\1^\dagger
A_U \ket{\psi_{A,B}} \right|.
\end{split}
\end{equation}
From the work \cite{chiribella2008memory} 
we know that there exists a state
$\ket{\phi}$ such that for all $\ket{\psi_{A,B}}$ it holds that
\begin{equation}
|\bra{\psi_{A,B}} A_\1^\dagger A_U 
\ket{\psi_{A,B}}| \geq |\bra{\phi} U^{\otimes N} \ket{\phi}|.
\end{equation}
Moreover, using optimality of $U$ and Lemma~\ref{lem: state-1}, the state 
$\ket{\phi}$ 
can be chosen to satisfy 
$|\bra{\phi} U^{\otimes N} 
\ket{\phi}| = \min_{\rho} \sum_{i} |\bra{i} \rho
U^{\otimes N} \ket{i}|$. 
This leads to the desired inequality
\begin{equation}
\pent(\Psi_U,\Psi_\1) \leq  1- \min_{\rho} \sum_{i} |\bra{i} \rho
U^{\otimes N} \ket{i}|.
\end{equation}
\end{proof}

\section{Proof of 
Theorem~\ref{thm:entFREEunamb}}\label{app:entFREEunamb}

\begin{proof}[Proof of Theorem \ref{thm:entFREEunamb}]

Let us fix the input state $\sigma \in \DD(\C^d)$.
Without loss of generality we can
restrict our attention to measurements $\MM$ with diagonal effects
(see Eq.~\eqref{eq:dephasing} in Appendix \ref{app:ENTunamb}).

From the unambiguity condition we obtain  
that $M_\1
\perp \mathrm{supp}(\PP_U (\sigma))$ and $M_U \perp \mathrm{supp}(\PP_\1
(\sigma))$. Therefore, the optimal measurements can be always chosen as
projectors onto disjoint subsets $\Gamma, \Lambda$ of $\lbrace 1,\ldots,d
\rbrace$. The formula for the success probability 
reads
\begin{equation} \label{eq:p_unambiguous}
\p\left(\PP_\1,\PP_U; \sigma, \Gamma,\Lambda \right) = \frac{1}{2} 
\tr(P_\Gamma
\sigma) + \frac{1}{2} \tr(Q_\Lambda \sigma).
\end{equation}
Importantly, the input state $\sigma$ satisfies $\sigma\perp P_{\Lambda}$ and
$\sigma\perp Q_{\Gamma}$.  For
fixed subsets $\Lambda,\Gamma$, due to linearity, the maximum over $\sigma$
equals $\|\mathbb{P}_{\Gamma,\Delta} ( P_\Gamma +Q_\Lambda) 
\mathbb{P}_{\Gamma,\Delta} \|$, where $\|\cdot\|$ denotes the operator norm and
$\mathbb{P}_{\Gamma,\Delta}$ is the orthogonal projector onto $\textrm{Span}
\left(\{ U\ket{i}\}_{i \in \Gamma^c} \right) \cap \textrm{Span} \left(\{
\ket{j}\}_{j \in \Lambda^c} \right)$. By optimizing over disjoint subsets
$\Lambda,\Gamma\subset\lbrace 1,\ldots,d\rbrace $ we obtain the  result.

\end{proof}
\end{document}